\author{%
  Kuba\\
  Muning\\
  Linghui\\
  Yaodong\\
  Shangding
}
\def\eqref#1{equation~\ref{#1}}
\def\1{\bm{1}}
\def\ra{{\textnormal{a}}}
\def\rg{{\textnormal{g}}}
\def\rr{{\textnormal{r}}}
\def\rs{{\textnormal{s}}}
\def\rva{{\mathbf{a}}}
\def\rvg{{\mathbf{g}}}
\def\vtheta{{\bm{\theta}}}
\def\va{{\bm{a}}}
\def\ve{{\bm{e}}}
\def\vz{{\bm{z}}}
\def\vtheta{{\boldsymbol{\theta}}}
\DeclareMathAlphabet{\mathsfit}{\encodingdefault}{\sfdefault}{m}{sl}
\SetMathAlphabet{\mathsfit}{bold}{\encodingdefault}{\sfdefault}{bx}{n}
\newcommand{\E}{\mathbb{E}}
\newcommand{\softmax}{\mathrm{softmax}}
\declaretheorem{remark}
\declaretheorem{assumption}
\DeclareMathAlphabet\mathbfcal{OMS}{cmsy}{b}{n}
\newenvironment{smalleralign}[1][\small]
 {\par\nopagebreak\leavevmode\vspace*{-\baselineskip}%
  \skip0=\abovedisplayskip
  #1%
  \def\maketag@@@##1{\hbox{\m@th\normalfont\normalsize##1}}%
  \abovedisplayskip=\skip0
  \align}
 {\endalign\ignorespacesafterend}
\definecolor{codegreen}{rgb}{0,0.6,0}
\definecolor{codegray}{rgb}{0.5,0.5,0.5}
\definecolor{codepurple}{rgb}{0.58,0,0.82}
\definecolor{backcolour}{rgb}{0.95,0.95,0.92}
\lstdefinestyle{mystyle}{
    backgroundcolor=\color{backcolour},   
    commentstyle=\color{codegreen},
    keywordstyle=\color{magenta},
    numberstyle=\tiny\color{codegray},
    stringstyle=\color{codepurple},
    basicstyle=\ttfamily\footnotesize,
    breakatwhitespace=false,         
    breaklines=true,                 
    captionpos=b,                    
    keepspaces=true,                 
    numbers=left,                    
    numbersep=5pt,                  
    showspaces=false,                
    showstringspaces=false,
    showtabs=false,                  
    tabsize=2
}
\title{Supplementaty Material for \\
Settling the Variance of Multi-Agent Policy Gradients}
\author{%
 Jakub Kuba Grudzien
 \And
 Yaodong Yang
}
\begin{document}

\maketitle

\appendix

\tableofcontents

\clearpage

\section{Preliminary Remarks}
\label{appendix:remarks}

\begin{remark}
    The multi-agent state-action value function obeys the bounds
    \begin{align}
        \left| Q_{\vtheta}^{i_{1}, \dots, i_{k}}\left(s, \va^{(i_{1}, \dots, i_{k})}\right)
        \right|\leq \frac{\beta}{1-\gamma},  \ \ \  \text{for all } s\in\mathcal{S}, \ \va^{(i_{1}, \dots, i_{k})}\in\mathbfcal{A}^{(i_{1}, \dots, i_{k})}.\nonumber
    \end{align}
\end{remark}

\begin{proof}
It suffices to prove that, for all $t$, the total reward satisfies $\left|R_{t}\right|\leq \frac{\beta}{1-\gamma}$, as the value functions are expectations of it. We have
\begin{align}
    &\left| R_{t} \right| = \left| \sum_{k=0}^{\infty}\gamma^{k}\rr_{t+k} \right| \leq
    \sum_{k=0}^{\infty}\left|\gamma^{k}\rr_{t+k} \right| \leq
    \sum_{k=0}^{\infty}\gamma^{k}\beta 
    = \frac{\beta}{1-\gamma} \nonumber
\end{align}
\end{proof}

\begin{remark}
\label{remark:bounded-maad}
    The multi-agent advantage function is bounded.
\end{remark}
\begin{proof}
    We have
    \begin{align}
        &\left|A_{\vtheta}^{i_{1}, \dots, i_{k}}\left(s,
        \va^{(j_{1}, \dots, j_{m})},
        \va^{(i_{1}, \dots, i_{k})}\right) \right| \nonumber\\
        &= \left|Q_{\vtheta}^{j_{1}, \dots, j_{m}, i_{1}, \dots, i_{k}}\left(s, \va^{(j_{1}, \dots, j_{m}, i_{1}, \dots, i_{k})}\right)  - 
        Q_{\vtheta}^{j_{1}, \dots, j_{m}}\left(s, \va^{(j_{1}, \dots, j_{m}}\right)
        \right|\nonumber\\
        &\leq \left|Q_{\vtheta}^{j_{1}, \dots, j_{m}, i_{1}, \dots, i_{k}}\left(s, \va^{(j_{1}, \dots, j_{m}, i_{1}, \dots, i_{k})}\right)\right| \ 
        + \ \left| Q_{\vtheta}^{j_{1}, \dots, j_{m}}\left(s, \va^{(j_{1}, \dots, j_{m}}\right) \right| \ \leq \ \frac{2\beta}{1-\gamma}\nonumber
    \end{align}
\end{proof}

\begin{remark}
Baselines in MARL have the following property
\begin{align}
    \E_{\rs\sim d^{t}_{\vtheta}, \rva\sim\boldsymbol{\pi_{\vtheta}}}\left[ b\left(\rs, \rva^{-i}\right)\nabla_{\theta^{i}}\log\pi^{i}_{\vtheta}(\ra^{i}|\rs)\right] = \bold{0}. \nonumber
\end{align}
\end{remark}

\begin{proof}
\label{proof:baseline-no-bias}
We have 
\begin{align}
    \E_{\rs\sim d^{t}_{\vtheta}, \rva\sim\boldsymbol{\pi_{\vtheta}}}\left[ b\left(\rs, \rva^{-i}\right)\nabla_{\theta^{i}}\log\pi^{i}_{\vtheta}(\ra^{i}|\rs)\right]
    =
    \E_{\rs\sim d^{t}_{\vtheta}, \rva^{-i}\sim\boldsymbol{\pi}^{-i}_{\vtheta}}\left[ 
    b\left(\rs, \rva^{-i}\right)
    \E_{\ra^{i}\sim\pi^{i}_{\vtheta}}
    \left[
    \nabla_{\theta^{i}}\log\pi^{i}_{\vtheta}(\ra^{i}|\rs)\right]
    \right],\nonumber
\end{align}
which means that it suffices to prove that for any $s\in\mathcal{S}$
\begin{align}
    \E_{\ra^{i}\sim\pi^{i}_{\vtheta}}
    \left[
    \nabla_{\theta^{i}}\log\pi^{i}_{\vtheta}(\ra^{i}|s)\right] = \bold{0}.\nonumber
\end{align}
We prove it for continuous $\mathcal{A}^{i}$. The discrete case is analogous.
\begin{align}
    &\E_{\ra^{i}\sim\pi^{i}_{\vtheta}}
    \left[
    \nabla_{\theta^{i}}\log\pi^{i}_{\vtheta}(\ra^{i}|s)\right] = \int_{\mathcal{A}^{i}}\pi^{i}_{\vtheta}\left(
    a^{i}|s
    \right) \nabla_{\theta^{i}}\log\pi^{i}_{\vtheta}(\ra^{i}|s)\ da^{i}\nonumber\\
    &= \int_{\mathcal{A}^{i}}\nabla_{\theta^{i}}\pi^{i}_{\vtheta}\left(
    a^{i}|s\right)\ da^{i} = \nabla_{\theta^{i}}\int_{\mathcal{A}^{i}}\pi^{i}_{\vtheta}\left(
    a^{i}|s\right)\ da^{i} = \nabla_{\theta^{i}}(1) = \bold{0}\nonumber
\end{align}
\end{proof}

\newpage

\section{Proofs of the Theoretical Results}
\label{appendix:proof-of-results}
\subsection{Proofs of Lemmas \ref{lemma:ma-advantage-decomp}, \ref{lemma:advantage-variance-decomp},
and \ref{lemma:advantage-var-upper-bound}}
In this subsection, we prove the lemmas stated in the paper. We realise that their application to other, very complex, proofs is not always immediately clear. To compensate for that, we provide the stronger versions of the lemmas; we give a detailed proof of the strong version of Lemma \ref{lemma:ma-advantage-decomp} which is supposed to demonstrate the equivalence of the normal and strong versions, and prove the normal versions of Lemmas \ref{lemma:advantage-variance-decomp} \& \ref{lemma:advantage-var-upper-bound}, and state their stronger versions as remarks to the proofs.
\begin{restatable}[Multi-agent advantage decomposition]{lemma}{maadlemma}
\label{lemma:ma-advantage-decomp}
For any state $s\in\mathcal{S}$, the following equation holds for any subset of $m$ agents and any permutation of their labels, that is, 
\begin{align}
    A_{\vtheta}^{1, \dots, m}\left(s, \va^{(1, \dots, m)}\right) = \sum_{i=1}^{m}A_{\vtheta}^{i}\left(s, \va^{(1, \dots, i-1)}, a^{i}\right)\nonumber. 
\end{align}
\end{restatable}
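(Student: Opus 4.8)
The plan is to prove the identity by a telescoping argument, after reducing the general ``any subset, any permutation'' claim to the case of the first $m$ agents in their stated order. Since the superscripts only name which agents' actions are held fixed, relabelling the chosen $m$ agents as $1,\dots,m$ (in the order dictated by the permutation) is without loss of generality, so it suffices to establish
\[
    A_{\vtheta}^{1,\dots,m}\!\left(s,\va^{(1,\dots,m)}\right)
    = \sum_{i=1}^{m} A_{\vtheta}^{i}\!\left(s,\va^{(1,\dots,i-1)},a^{i}\right).
\]

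First I would unfold both sides into state-action value functions using the definition of the multi-agent advantage recalled in the proof of Remark~\ref{remark:bounded-maad}. Each single-agent summand becomes $A_{\vtheta}^{i}\!\left(s,\va^{(1,\dots,i-1)},a^{i}\right) = Q_{\vtheta}^{1,\dots,i}\!\left(s,\va^{(1,\dots,i)}\right) - Q_{\vtheta}^{1,\dots,i-1}\!\left(s,\va^{(1,\dots,i-1)}\right)$, where for $i=1$ the subtracted term is the value function $V_{\vtheta}(s)=Q_{\vtheta}^{\emptyset}(s)$ (the empty preceding set). For the left-hand side, decomposing the full group against the empty set of preceding agents gives $A_{\vtheta}^{1,\dots,m}\!\left(s,\va^{(1,\dots,m)}\right) = Q_{\vtheta}^{1,\dots,m}\!\left(s,\va^{(1,\dots,m)}\right) - V_{\vtheta}(s)$. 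The core step is then to recognise the right-hand sum as telescoping: consecutive summands share the intermediate quantities $Q_{\vtheta}^{1,\dots,i}$, which cancel in pairs, leaving only the top term $Q_{\vtheta}^{1,\dots,m}\!\left(s,\va^{(1,\dots,m)}\right)$ and the bottom term $V_{\vtheta}(s)$. This matches the unfolded left-hand side exactly.

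The computation is short, so there is no serious analytic obstacle; the one point demanding care --- and the substance of the ``strong'' version --- is justifying the initial reduction to a fixed order. The telescoping yields the identity for any single chosen ordering, yet the individual advantage summands genuinely depend on that ordering, so this alone does not cover every permutation. To close this gap I would argue that the full-group quantity $Q_{\vtheta}^{1,\dots,m}$ depends only on the \emph{set} of agents whose actions are fixed and not on the order in which they are listed, since by definition it is the expected return conditioned on those actions with the remaining agents acting under $\boldsymbol{\pi}_{\vtheta}$. Hence the left-hand side is permutation-invariant, and the telescoping shows the right-hand sum equals it for every ordering, which is precisely the claimed statement.
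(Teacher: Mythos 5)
Your proof is correct and takes essentially the same route as the paper's: both unfold each multi-agent advantage into a difference of joint state-action value functions and collapse the sum by telescoping, with $Q_{\vtheta}^{\emptyset}(s)=V_{\vtheta}(s)$ anchoring the $i=1$ term. The only differences are minor: you make explicit the permutation-invariance of $Q_{\vtheta}^{1,\dots,m}$ that justifies the relabelling (which the paper leaves implicit), whereas the paper instead proves a stronger version with a fixed action prefix $\va^{(1,\dots,k)}$ --- obtained by the identical telescoping argument, with the lemma as the $k=0$ case --- because that generalisation is what it reuses in later proofs.
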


\begin{proof}
    \label{proof:ma-advantage-decomp}
    We prove a slightly \textbf{stronger}, but perhaps less telling, version of the lemma, which is
    \begin{align}
        \label{lemma:maad-stronger}
        A_{\vtheta}^{k+1, \dots, m}\left(s, \va^{(1, \dots, k)} , \va^{(k+1, \dots, m)}\right) = \sum_{i=k+1}^{m}A_{\vtheta}^{i}\left(s, \va^{(1, \dots, i-1)}, a^{i}\right).
    \end{align}
    The original form of the lemma will follow from the above by taking $k=0$. \\
    By the definition of the multi-agent advantage, we have
    \begin{align}
        &A^{k+1, \dots, m}_{\vtheta}\left(s, \va^{(1, \dots, k)}, \va^{(k+1, \dots,m)}\right)\nonumber \\
        &= 
        Q^{1, \dots,k, k+1, \dots, m}_{\vtheta}\left(s, \va^{(1, \dots, k, k+1, \dots,  m)}\right) 
        - Q^{1, \dots, k}_{\vtheta}\left(s, \va^{(1, \dots, k)}\right)
        \nonumber
    \end{align}
    which can be written as a telescoping sum
    \begin{align}
        & Q^{1, \dots,k, k+1, \dots, m}_{\vtheta}\left(s, \va^{(1, \dots, k, k+1, \dots,  m)}\right) 
        - Q^{1, \dots, k}_{\vtheta}\left(s, \va^{(1, \dots, k)}\right) \nonumber\\
        &= \sum_{i=k+1}^{m}\left[ Q_{\vtheta}^{(1, \dots, i)}\left(s, \va^{(1, \dots, i)}\right) - Q_{\vtheta}^{(1, \dots, i-1)}\left(s, \va^{(1, \dots, i-1)}\right)\right] \nonumber\\
        &= \sum_{i=k+1}^{m}A_{\theta}^{i}\left(s, \va^{(1, \dots, i-1)}, a^{i}\right) \nonumber
    \end{align}
\end{proof}

\begin{restatable}{lemma}{avdlemma}
\label{lemma:advantage-variance-decomp}
For any state $s\in\mathcal{S}$, we have 
\vspace{-5pt}
\begin{align}
    \mathbf{Var}_{\rva\sim\boldsymbol{\pi}_{\vtheta}}\big[ A_{\vtheta}(s, \rva) \big] =
    \sum_{i=1}^{n}\E_{\ra^{1}\sim\pi^{1}_{\vtheta}, \dots, \ra^{i-1}\sim\pi_{\vtheta}^{i-1}}\left[ \mathbf{Var}_{\ra^{i}\sim\pi^{i}_{\vtheta}}\left[ A_{\vtheta}^{i}\left(s, \rva^{(1, \dots, i-1)}, \ra^{i}\right) \right] \right] \nonumber. 
\end{align}
\end{restatable}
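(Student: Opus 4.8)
The plan is to combine the additive decomposition of Lemma~\ref{lemma:ma-advantage-decomp} with a martingale-difference structure of its summands. For a fixed state $s$, I would define the random variables $X_i = A_{\vtheta}^{i}\big(s, \rva^{(1,\dots,i-1)}, \ra^i\big)$, where the actions $\ra^1, \dots, \ra^n$ are drawn independently from the decentralised policies $\pi^1_\vtheta, \dots, \pi^n_\vtheta$ (so that $\boldsymbol{\pi}_\vtheta$ factorises given $s$). Applying Lemma~\ref{lemma:ma-advantage-decomp} with $m=n$ identifies the full-team advantage as $A_{\vtheta}(s, \rva) = A_\vtheta^{1,\dots,n}\big(s,\rva^{(1,\dots,n)}\big) = \sum_{i=1}^{n} X_i$, so the task reduces to analysing the variance of this sum.

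First I would establish that the $X_i$ form a martingale-difference sequence with respect to the filtration generated by $(\ra^1, \dots, \ra^i)$. The central identity is that, conditioned on $\rva^{(1,\dots,i-1)}$, the per-agent advantage averages to zero over the $i$-th action:
\begin{align}
    \E_{\ra^i\sim\pi^i_\vtheta}\big[ A_{\vtheta}^{i}(s, \rva^{(1,\dots,i-1)}, \ra^i) \big] = 0, \nonumber
\end{align}
which follows by writing $A_\vtheta^{i} = Q_\vtheta^{1,\dots,i} - Q_\vtheta^{1,\dots,i-1}$ and using that $Q_\vtheta^{1,\dots,i-1}$ is precisely the expectation of $Q_\vtheta^{1,\dots,i}$ over $\ra^i\sim\pi^i_\vtheta$ (marginalisation of the $i$-th agent's action). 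Consequently $\E[X_i \mid \ra^1, \dots, \ra^{i-1}] = 0$ and, taking a further expectation, $\E[X_i] = 0$.

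Next I would show the cross terms vanish. For $i<j$, the tower property yields $\E[X_i X_j] = \E\big[ X_i\,\E[X_j \mid \ra^1,\dots,\ra^{j-1}] \big] = 0$, since $X_i$ is measurable with respect to $(\ra^1,\dots,\ra^{j-1})$ while the inner conditional expectation vanishes by the previous step. Expanding the square then gives $\mathbf{Var}_{\rva}\big[A_\vtheta(s,\rva)\big] = \E\big[\big(\sum_{i} X_i\big)^2\big] = \sum_{i=1}^{n}\E[X_i^2]$. Finally, because $X_i$ has zero conditional mean, its conditional second moment coincides with its conditional variance, so $\E[X_i^2] = \E_{\ra^1,\dots,\ra^{i-1}}\big[ \mathbf{Var}_{\ra^i\sim\pi^i_\vtheta}[X_i] \big]$; summing over $i$ reproduces the right-hand side.

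The main obstacle is the zero-conditional-mean identity, i.e. verifying the marginalisation relation between the nested $Q$-functions that makes $A_\vtheta^{i}$ integrate to zero against $\pi^i_\vtheta$; this is exactly the structural fact that converts the telescoping decomposition of Lemma~\ref{lemma:ma-advantage-decomp} into an orthogonal one, and it is of the same flavour as the baseline-unbiasedness remark proved earlier. Once that is in hand, the rest is standard law-of-total-variance bookkeeping, with the conditional independence of the decentralised actions ensuring the filtration and tower-property manipulations go through cleanly.
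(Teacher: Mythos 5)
Your proof is correct, and it reaches the identity by a somewhat different route than the paper. The paper proceeds recursively: it applies the law of total variance to peel off agent $n$, uses the marginalisation identity $\E_{\ra^{n}\sim\pi^{n}_{\vtheta}}\left[A^{1,\dots,n}_{\vtheta}\right] = A^{1,\dots,n-1}_{\vtheta}$ to recognise the squared conditional mean as the squared shorter advantage, and invokes the \emph{strong} version of Lemma~\ref{lemma:ma-advantage-decomp} (Equation~\ref{lemma:maad-stronger}) to replace $\mathbf{Var}_{\ra^{n}\sim\pi^{n}_{\vtheta}}\left[A^{1,\dots,n}_{\vtheta}\right]$ by $\mathbf{Var}_{\ra^{n}\sim\pi^{n}_{\vtheta}}\left[A^{n}_{\vtheta}\right]$; unrolling the recursion yields the sum. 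You instead apply the plain Lemma~\ref{lemma:ma-advantage-decomp} once (with $m=n$), observe that the summands $X_{i}$ form a martingale-difference sequence because $\E_{\ra^{i}\sim\pi^{i}_{\vtheta}}\left[A^{i}_{\vtheta}\left(s,\rva^{(1,\dots,i-1)},\ra^{i}\right)\right]=0$ (which follows from $\E_{\ra^{i}\sim\pi^{i}_{\vtheta}}\left[Q^{1,\dots,i}_{\vtheta}\right]=Q^{1,\dots,i-1}_{\vtheta}$, valid since the joint policy factorises across agents), and kill the cross terms by the tower property. Both arguments rest on the same two pillars --- the telescoping decomposition and the marginalisation identity, the latter being exactly the structural fact the paper uses implicitly both in writing $\mathbf{Var}\left[A_{\vtheta}\right]=\E\left[A_{\vtheta}^{2}\right]$ and in the recursion step --- but your orthogonality bookkeeping is arguably cleaner: it avoids the induction and the in-proof use of the strong lemma (note that conditioning on $\rva^{(1,\dots,n-1)}$ makes $A^{1,\dots,n}_{\vtheta}-A^{n}_{\vtheta}$ a constant, so the strong lemma performs precisely the cross-term cancellation in disguise). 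What the paper's presentation buys is the conditioned, stronger version of the statement recorded in the remark following its proof, which is later cited in the proof of Theorem~\ref{theorem:excess-variance}; your martingale argument delivers that version just as easily --- start the difference sequence at $i=k+1$ with $\va^{(1,\dots,k)}$ held fixed --- but you would want to state it explicitly for downstream use.
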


\begin{proof}
    \label{proof:advantage-variance-decomp}
    The trick of this proof is to develop a relation on the variance of multi-agent advantage which is recursive over the number of agents. We have
    \begin{align}
        &\mathbf{Var}_{\rva\sim\boldsymbol{\pi}_{\vtheta}}\left[ A_{\vtheta}(s, \rva) \right] = 
        \mathbf{Var}_{
        \ra^{1}\sim\pi^{1}_{\vtheta}, \dots, \ra^{v}\sim\pi_{\vtheta}^{v}
        }\left[ A^{1, \dots, n}_{\vtheta}\left(s, \ra^{(1, \dots, n)}\right) \right] \nonumber\\
        &= \E_{\ra^{1}\sim\pi^{1}_{\vtheta}, \dots, \ra^{n}\sim\pi^{n}_{\vtheta}}\left[ A^{1, \dots, n}_{\vtheta}\left(s, \rva^{(1, \dots, n)}\right)^{2} \right] \nonumber\\
        &= \E_{\ra^{1}\sim\pi^{1}_{\vtheta}, \dots, \ra^{n-1}\sim\pi^{n-1}_{\vtheta}}\Bigg[ \E_{\ra^{n}\sim\pi^{n}_{\vtheta}}\left[
        A^{1, \dots, n}_{\vtheta}\left(s, \rva^{(1, \dots, n)}\right)^{2}
        \right] \nonumber\\
        &\ \ \ \ \ \ \ \ \ \ \ \ - \E_{\ra^{n}\sim\pi^{n}_{\theta}}\left[A^{1, \dots, n}_{\vtheta}\left(s, \rva^{(1, \dots, n)}\right)\right]^{2}
        +
        \E_{\ra^{n}\sim\pi^{n}_{\theta}}\left[A^{1, \dots, n}_{\vtheta}\left(s, \rva^{(1, \dots, n)}\right)\right]^{2} \Bigg]\nonumber\\
        &= \E_{\ra^{1}\sim\pi^{1}_{\vtheta}, \dots, \ra^{n-1}\sim\pi^{n-1}_{\vtheta}}\left[ 
        \mathbf{Var}_{\ra^{n}\sim\pi^{n}_{\theta}}\left[ 
        A^{1, \dots, n}_{\vtheta}\left(s, \rva^{(1, \dots, n)}\right)
        \right] \right]\nonumber\\
        &\ \ \ + 
        \E_{\ra^{1}\sim\pi^{1}_{\vtheta}, \dots, \ra^{n-1}\sim\pi^{n-1}_{\vtheta}}\left[ 
        A^{1, \dots, n-1}_{\vtheta}\left(s, \rva^{(1, \dots, n-1)}\right)^{2}
        \ \right]
        \nonumber
    \end{align}
    which, by the stronger version of Lemma \ref{lemma:ma-advantage-decomp}, given by Equation \ref{lemma:maad-stronger}, applied to the first term, equals
    \begin{align}
        &\E_{\ra^{1}\sim\pi^{1}_{\vtheta}, \dots, \ra^{n-1}\sim\pi^{n-1}_{\vtheta}}\left[ 
        \mathbf{Var}_{\ra^{n}\sim\pi^{n}_{\theta}}\left[ 
        A^{n}_{\vtheta}\left(s, \rva^{(1, \dots, n-1)}, \ra^{n}\right)
        \right] \right]\nonumber\\
        &\ \ \ + 
        \E_{\ra^{1}\sim\pi^{1}_{\vtheta}, \dots, \ra^{n-1}\sim\pi^{n-1}_{\vtheta}}\left[ 
        A^{1, \dots, n-1}_{\vtheta}\left(s, \rva^{(1, \dots, n-1)}\right)^{2}
        \ \right]
        \nonumber
    \end{align}
    Hence, we have a recursive relation
    \begin{align}
        &\mathbf{Var}_{
        \ra^{1}\sim\pi^{1}_{\vtheta}, \dots, \ra^{v}\sim\pi_{\vtheta}^{v}
        }\left[ A^{1, \dots, n}_{\vtheta}\left(s, \ra^{(1, \dots, n)}\right) \right] \nonumber\\
        &=\E_{\ra^{1}\sim\pi^{1}_{\vtheta}, \dots, \ra^{n-1}\sim\pi^{n-1}_{\vtheta}}\left[ 
        \mathbf{Var}_{\ra^{n}\sim\pi^{n}_{\vtheta}}\left[ 
        A^{n}_{\vtheta}\left(s, \rva^{(1, \dots, n-1)}, \ra^{n}\right)
        \right] \right]\nonumber\\
        &\ \ \ + 
        \mathbf{Var}_{\ra^{1}\sim\pi^{1}_{\vtheta}, \dots, \ra^{n-1}\sim\pi^{n-1}_{\vtheta}}\left[ 
        A^{1, \dots, n-1}_{\vtheta}\left(s, \rva^{(1, \dots, n-1)}\right)
        \ \right]
        \nonumber
    \end{align}
    from which we can obtain
    \begin{align}
        &\mathbf{Var}_{
        \ra^{1}\sim\pi^{1}_{\vtheta}, \dots, \ra^{v}\sim\pi_{\vtheta}^{v}
        }\left[ A^{1, \dots, n}_{\vtheta}\left(s, \ra^{(1, \dots, n)}\right) \right] \nonumber\\
        &= \sum_{i=1}^{n}\E_{\ra^{1}\sim\pi^{1}_{\vtheta}, \dots, \ra^{i-1}\sim\pi^{i-1}_{\vtheta}}\left[ 
        \mathbf{Var}_{\ra^{i}\sim\pi^{i}_{\vtheta}}\left[ 
        A^{i}_{\vtheta}\left(s, \rva^{(1, \dots, i-1)}, \ra^{i}\right)
        \right] \right]\nonumber
    \end{align}

\end{proof}
\begin{remark}
    Lemma \ref{lemma:advantage-variance-decomp} has a \textbf{stronger} version, coming as a corollary to the above proof; that is
    \begin{align}
        &\mathbf{Var}_{\ra^{k+1}\sim\pi^{k+1}_{\vtheta}, \dots, \ra^{n}\sim\pi^{n}_{\vtheta}}
        \left[ A^{k+1, \dots, n}\left(s, \va^{(1, \dots, k)}, \rva^{(k+1, \dots, n)}  \right) \right] \nonumber \\
        &= \sum_{i=k+1}^{n}\E_{\ra^{k+1}\sim\pi^{k+1}_{\vtheta}, \dots, \ra^{i-1}\sim\pi^{i-1}_{\vtheta}}\left[ 
        \mathbf{Var}_{\ra^{i}\sim\pi^{i}_{\vtheta}}
        \left[ A^{i}_{\vtheta}\left(s, \va^{1, \dots, k}, \rva^{k+1, \dots, i-1} , \ra^{i} \right) \right]
        \right].
    \end{align}
    We think of it as a corollary to the proof of the lemma, as the fixed joint action $\va^{1, \dots, k}$ has the same algebraic properites, throghout the proof, as state $s$.
    
\end{remark}

\begin{restatable}{lemma}{avublemma}
\label{lemma:advantage-var-upper-bound}
    For any state $s\in\mathcal{S}$, we have
    \vspace{-5pt}
    \begin{align}
        &\mathbf{Var}_{\rva\sim\boldsymbol{\pi}_{\vtheta}}
        \left[ A_{\vtheta}(s, \rva) \right] \ \leq \ \sum_{i=1}^{n}\mathbf{Var}_{\rva^{-i}\sim\boldsymbol{\pi}^{-i}_{\vtheta}, \ra^{i}\sim\pi^{i}_{\vtheta}}\left[ A^{i}_{\vtheta}(s, \rva^{-i}, \ra^{i}) \right].
        \nonumber
    \end{align}
\end{restatable}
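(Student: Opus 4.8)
The plan is to start from the exact variance decomposition of Lemma~\ref{lemma:advantage-variance-decomp} and then dominate each of its summands by the corresponding term on the right-hand side, so that the inequality follows by summing over $i$. Concretely, Lemma~\ref{lemma:advantage-variance-decomp} writes the left-hand side as $\sum_{i=1}^{n}\E_{\ra^{1}, \dots, \ra^{i-1}}\big[\mathbf{Var}_{\ra^{i}}[A^{i}_{\vtheta}(s, \rva^{(1, \dots, i-1)}, \ra^{i})]\big]$, so it suffices to prove, for each fixed $i$, the single-agent bound
\begin{align}
\E_{\ra^{1}\sim\pi^{1}_{\vtheta}, \dots, \ra^{i-1}\sim\pi^{i-1}_{\vtheta}}\Big[\mathbf{Var}_{\ra^{i}\sim\pi^{i}_{\vtheta}}\big[A^{i}_{\vtheta}(s, \rva^{(1, \dots, i-1)}, \ra^{i})\big]\Big] \leq \mathbf{Var}_{\rva^{-i}\sim\boldsymbol{\pi}^{-i}_{\vtheta}, \ra^{i}\sim\pi^{i}_{\vtheta}}\big[A^{i}_{\vtheta}(s, \rva^{-i}, \ra^{i})\big].\nonumber
\end{align}
This isolates the whole difficulty into comparing two versions of agent $i$'s advantage: the \emph{sequential} one, conditioned only on the predecessors $1, \dots, i-1$, and the \emph{full} one, conditioned on all other agents $-i$.

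The bridge between them is a conditional-expectation identity that I would establish first. Writing out both advantages through their defining $Q$-functions and marginalising over the later agents $\ra^{i+1}, \dots, \ra^{n}$ (which is exactly what passing from $Q^{1, \dots, n}$ to $Q^{1, \dots, i}$ and from $Q^{-i}$ to $Q^{1, \dots, i-1}$ amounts to, using that the joint policy factorises as a product of the $\pi^{j}_{\vtheta}$), I expect to obtain $A^{i}_{\vtheta}(s, \rva^{(1, \dots, i-1)}, \ra^{i}) = \E_{\ra^{i+1}, \dots, \ra^{n}}\big[A^{i}_{\vtheta}(s, \rva^{-i}, \ra^{i}) \,\big|\, \ra^{1}, \dots, \ra^{i}\big]$. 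In words, the predecessor-conditioned advantage is precisely the conditional expectation of the fully-conditioned advantage given the actions of agents $1, \dots, i$.

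With this identity in hand, let $X = A^{i}_{\vtheta}(s, \rva^{-i}, \ra^{i})$ and $Y = \E[X \mid \ra^{1}, \dots, \ra^{i}]$, so that $Y$ is the sequential advantage and depends only on $\ra^{1}, \dots, \ra^{i}$. The left-hand summand above is then $\E\big[\mathbf{Var}[Y \mid \ra^{1}, \dots, \ra^{i-1}]\big]$, and two successive applications of the law of total variance finish the job: first $\E\big[\mathbf{Var}[Y \mid \ra^{1}, \dots, \ra^{i-1}]\big] \leq \mathbf{Var}[Y]$, and then, since $Y = \E[X \mid \ra^{1}, \dots, \ra^{i}]$, the fact that conditioning cannot increase variance gives $\mathbf{Var}[Y] = \mathbf{Var}\big[\E[X \mid \ra^{1}, \dots, \ra^{i}]\big] \leq \mathbf{Var}[X]$. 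Chaining these yields the per-$i$ bound, and summing over $i$ gives the lemma.

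The main obstacle is the first step --- verifying the conditional-expectation identity cleanly. One has to be careful about which agents are being marginalised and to use that $\rva^{-i}$ and $\ra^{i}$ are independent under the product policy, so that conditioning on $\ra^{i}$ does not disturb the marginalisation of $Q^{-i}$ (which does not depend on $a^{i}$). Once that identity is secured, the remainder is just the standard ``variance of a conditional expectation is no larger than the variance'' argument applied twice, and the bookkeeping over the filtration $\sigma(\ra^{1}, \dots, \ra^{k})$ is routine.
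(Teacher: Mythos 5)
Your proof is correct, and while it shares the paper's skeleton --- start from Lemma~\ref{lemma:advantage-variance-decomp} and dominate each summand by the corresponding full variance --- its second half takes a genuinely cleaner route. Your key identity $A^{i}_{\vtheta}\big(s,\rva^{(1,\dots,i-1)},\ra^{i}\big) = \E\big[A^{i}_{\vtheta}(s,\rva^{-i},\ra^{i}) \mid \ra^{1},\dots,\ra^{i}\big]$ does hold, by exactly the marginalisation you describe: under the product policy, $\E_{\ra^{i+1},\dots,\ra^{n}}\big[Q_{\vtheta}(s,\rva)\big] = Q^{1,\dots,i}_{\vtheta}\big(s,\va^{(1,\dots,i)}\big)$ and $\E_{\ra^{i+1},\dots,\ra^{n}}\big[Q^{-i}_{\vtheta}(s,\rva^{-i})\big] = Q^{1,\dots,i-1}_{\vtheta}\big(s,\va^{(1,\dots,i-1)}\big)$, so the difference of conditional expectations is the sequential advantage. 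The paper instead writes the sequential advantage as $\E_{\ra^{i+1},\dots,\ra^{n}}\big[A^{i,\dots,n}_{\vtheta}\big]$, applies Jensen to the inner square, and then needs a cyclic relabelling of the agents plus the strong version of Lemma~\ref{lemma:ma-advantage-decomp} to recognise the resulting second moment as $\mathbf{Var}_{\rva\sim\boldsymbol{\pi}_{\vtheta}}\big[A^{i}_{\vtheta}(s,\rva^{-i},\ra^{i})\big]$, repeatedly exploiting that every advantage has zero mean over the agent's own action so that variances coincide with second moments. Your packaging --- conditioning $A^{i}_{\vtheta}(s,\rva^{-i},\ra^{i})$ directly and applying the law of total variance twice along the filtration $\sigma(\ra^{1},\dots,\ra^{k})$ --- buys a shorter and more transparent argument that dispenses with both the relabelling trick and the strong-maad step (in fact your first inequality is an equality, since $\E\big[Y \mid \ra^{1},\dots,\ra^{i-1}\big] = \E_{\ra^{i}\sim\pi^{i}_{\vtheta}}\big[A^{i}_{\vtheta}(s,\rva^{(1,\dots,i-1)},\ra^{i})\big] = 0$). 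What the paper's more laborious route buys is that all manipulations remain inside its algebra of multi-agent advantages, so the proof yields, essentially for free, the stronger parameterised form in Equation~\ref{eq:advantage-var-upper-bound-stronger} that is later needed in the proof of Theorem~\ref{theorem:excess-variance}; your argument extends to that form just as well by subsuming a fixed $\va^{(1,\dots,k)}$ into the state, but if this lemma is to serve that downstream use, that extension should be stated explicitly.
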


\begin{proof}
    \label{proof:advantage-var-upper-bound}
    By Lemma \ref{lemma:advantage-variance-decomp}, we have
    \begin{align}
        \label{eq:ref-var-decomp}
        \mathbf{Var}_{\rva\sim\boldsymbol{\pi}_{\vtheta}}\big[ A_{\vtheta}(s, \rva) \big] =
        \sum_{i=1}^{n}\E_{\ra^{1}\sim\pi^{1}_{\vtheta}, \dots, \ra^{i-1}\sim\pi_{\vtheta}^{i-1}}\left[ \mathbf{Var}_{\ra^{i}\sim\pi^{i}_{\vtheta}}\left[ A_{\vtheta}^{i}\left(s, \rva^{(1, \dots, i-1)}, \ra^{i}\right) \right] \right] 
    \end{align}
    Take an arbitrary $i$. We have
    \begin{align}
        &\E_{\ra^{1}\sim\pi^{1}_{\vtheta}, \dots, \ra^{i-1}\sim\pi_{\vtheta}^{i-1}}\left[ \mathbf{Var}_{\ra^{i}\sim\pi^{i}_{\vtheta}}\left[ A_{\vtheta}^{i}\left(s, \rva^{(1, \dots, i-1)}, \ra^{i}\right) 
        \right] \right] \nonumber\\
        &= \ \E_{\ra^{1}\sim\pi^{1}_{\vtheta}, \dots, \ra^{i-1}\sim\pi_{\vtheta}^{i-1}}\left[ \E_{\ra^{i}\sim\pi^{i}_{\vtheta}}\left[ A_{\vtheta}^{i}\left(s, \rva^{(1, \dots, i-1)}, \ra^{i}\right)^{2} 
        \right] \right] \nonumber\\
        &= \ \E_{\ra^{1}\sim\pi^{1}_{\vtheta}, \dots, \ra^{i-1}\sim\pi_{\vtheta}^{i-1}}\left[ \E_{\ra^{i}\sim\pi^{i}_{\vtheta}}\left[ \E_{\ra^{i+1}\sim\pi^{i+1}_{\vtheta}, \dots, \ra^{n}\sim\pi^{n}_{\vtheta}}\left[
        A_{\vtheta}^{i, \dots, n}
        \left(s, \rva^{(1, \dots, i-1)}, \rva^{(i, \dots, n)}\right)
        \right]^{2} 
        \right] \right]
        \nonumber\\
        &\leq  \ 
        \E_{\ra^{1}\sim\pi^{1}_{\vtheta}, \dots, \ra^{i-1}\sim\pi_{\vtheta}^{i-1}}\left[ \E_{\ra^{i}\sim\pi^{i}_{\vtheta}}\left[ \E_{\ra^{i+1}\sim\pi^{i+1}_{\vtheta}, \dots, \ra^{n}\sim\pi^{n}_{\vtheta}}\left[
        A_{\vtheta}^{i, \dots, n}
        \left(s, \rva^{(1, \dots, i-1)}, \rva^{(i, \dots, n)}\right)^{2}
        \right]
        \right] \right]
        \nonumber\\
        &= \
        \E_{\ra^{1}\sim\pi^{1}_{\vtheta}, \dots, \ra^{i-1}\sim\pi^{i-1}_{\vtheta},\ra^{i+1}\sim\pi^{i+1}_{\vtheta}, \dots, \ra^{n}\sim\pi^{n}_{\vtheta}}
        \left[  
        \E_{\ra^{i}\sim\pi^{i}_{\vtheta}}
        \left[
         A_{\vtheta}^{i, \dots, n}
        \left(s, \rva^{(1, \dots, i-1)}, \rva^{(i, \dots, n)}\right)^{2}
        \right]  \right]
        \nonumber
    \end{align}
    The above can be equivalently, but more tellingly, rewritten after permuting (cyclic shift) the labels of agents, in the following way
    \begin{align}
        &\E_{\ra^{-i}\sim\pi^{-i}_{\vtheta}}
        \left[  
        \E_{\ra^{i}\sim\pi^{i}_{\vtheta}}
        \left[
        A_{\vtheta}^{i+1, \dots, n, i}\left
        (s, \rva^{(1, \dots, i-1)}, \rva^{(i+1, \dots, n, i)}\right)^{2}
        \right]  \right]
        \nonumber\\
        &= \E_{\ra^{-i}\sim\pi^{-i}_{\vtheta}}
        \left[  
        \mathbf{Var}_{\ra^{i}\sim\pi^{i}_{\vtheta}}
        \left[
        A_{\vtheta}^{i+1, \dots, n, i}\left
        (s, \rva^{(1, \dots, i-1)}, \rva^{(i+1, \dots, n, i)}\right)
        \right]  \right]
        \nonumber
    \end{align}
    which, by the strong version of Lemma \ref{lemma:ma-advantage-decomp}, equals
    \begin{align}
        \E_{\rva^{-i}\sim\pi^{-i}_{\vtheta}}
        \left[  
        \mathbf{Var}_{\ra^{i}\sim\pi^{i}_{\theta}}
        \left[
        A_{\vtheta}^{i}\left(s, \rva^{-i}, \ra^{i}\right)
        \right]  
        \right]
        \nonumber
    \end{align}
    which can be further simplified by
    \begin{align}
        &\E_{\rva^{-i}\sim\pi^{-i}_{\vtheta}}
        \left[  
        \mathbf{Var}_{\ra^{i}\sim\pi^{i}_{\vtheta}}
        \left[
        A_{\vtheta}^{i}\left(s, \rva^{-i}, \ra^{i}\right)
        \right]  
        \right] = 
        \E_{\rva^{-i}\sim\pi^{-i}_{\vtheta}}
        \left[  
        \E_{\ra^{i}\sim\pi^{i}_{\vtheta}}
        \left[
        A_{\vtheta}^{i}\left(s, \rva^{-i}, \ra^{i}\right)^{2}
        \right]  
        \right]\nonumber\\
        &= 
        \E_{\rva\sim\boldsymbol{\pi}_{\vtheta}}
        \left[
        A_{\vtheta}^{i}\left(s, \rva^{-i}, \ra^{i}\right)^{2}
        \right] 
         = \mathbf{Var}_{\rva\sim\boldsymbol{\pi}_{\vtheta}}
        \left[
        A_{\vtheta}^{i}\left(s, \rva^{-i}, \ra^{i}\right)
        \right] 
         \nonumber
    \end{align}
    which, combined with Equation \ref{eq:ref-var-decomp}, finishes the proof.
\end{proof}
\begin{remark}
Again, subsuming a joint action $\va^{(1, \dots, k)}$ into state in the above proof, we can have a \textbf{stronger} version of Lemma \ref{lemma:advantage-var-upper-bound}, 
\begin{align}
    \label{eq:advantage-var-upper-bound-stronger}
    &\mathbf{Var}_{\ra^{k+1}\sim\pi^{k+1}_{\vtheta}, \dots, \ra^{n}\sim\pi^{n}_{\vtheta}}\left[ 
    A^{k+1, \dots, n}_{\vtheta}\left(s, \va^{(1, \dots, k)}, \rva^{(k+1, \dots, n)} \right)   \right] \nonumber \\
    &\leq \sum_{i=k+1}^{n}\mathbf{Var}_{\ra^{k+1}\sim\pi^{k+1}_{\vtheta}, \dots,  \ra^{n}\sim\pi^{n}_{\vtheta}}
    \left[ A^{i}\left( s, \va^{(k+1, \dots, i-1, i+1, \dots, n)}, \ra^{i}\right) \right]
\end{align}
\end{remark}

\subsection{Proofs of Theorems \ref{theorem:excess-variance} and \ref{theorem:coma-dt}}
Let us recall the two assumptions that we make in the paper.
\begin{assumption}
    The state space $\mathcal{S}$, and every agent $i$'s action space $\mathcal{A}^{i}$ is either discrete and finite, or continuous and compact. 
\end{assumption}
\begin{assumption}
    For all $i\in\mathcal{N}$, $s\in\mathcal{S}$, $a^{i}\in\mathcal{A}^{i}$, the map $\theta^{i}\mapsto \pi^{i}_{\vtheta}(a^{i}|s)$ is continuously differentiable.
\end{assumption}
These assumptions assure that the supremum $\sup_{s, a^{i}}\left|\left| \nabla_{\theta^{i}}\log\pi^{i}_{\vtheta}(a^{i}|s)\right|\right|$ exists for every agent $i$. We notice that the supremum $\sup_{s, \va^{-i}, a^{i}}\left|A^{i}(s, \va^{-i}, a^{i})\right|$ exists regardless of assumptions, as by Remark \ref{remark:bounded-maad}, the multi-agent advantage is bounded from both sides.
\begin{restatable}{theorem}{evtheorem}
    \label{theorem:excess-variance}
    The CTDE and DT estimators of MAPG satisfy 
    \begin{align}
        &\mathbf{Var}_{\rs_{0:\infty}\sim d_{\vtheta}^{0:\infty}, \rva_{0:\infty}\sim\boldsymbol{\pi}_{\vtheta}}
        \big[ \rvg^{i}_{\text{C}} \big] -
        \mathbf{Var}_{\rs_{0:\infty} \sim d_{\vtheta}^{0:\infty}, \rva_{0:\infty}\sim\boldsymbol{\pi}_{\vtheta} }
        \big[ \rvg^{i}_{\text{D}} \big] \
        \leq \ \frac{B_{i}^{2}}{1-\gamma^{2}}\sum_{j\neq i}\epsilon_{j}^{2} \
        \leq \ (n-1)\frac{(\epsilon B_{i})^{2}}{1-\gamma^{2}}\nonumber
    \end{align}
    where
    {\small 
    $B_{i} = \sup_{s, \va}\left|\left|\nabla_{\theta^{i}}\log\pi^{i}_{\vtheta}\left( \ra^{i}|\rs \right) \right|\right| , \
    \epsilon_{i} = \sup_{s, \va^{-i}, a^{i}}\left|A^{i}_{\vtheta}(s, \va^{-i}, a^{i})\right| , \ \text{and} \ 
    \epsilon = \max_{i}\epsilon_{i} \ .
    $}
\end{restatable}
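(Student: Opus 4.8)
The plan is to use that the two estimators are unbiased for the same gradient, so that the difference of variances collapses to a difference of second moments, and then to isolate the residual $\rvg^{i}_{\text{C}}-\rvg^{i}_{\text{D}}$ and control it with the strong form of Lemma~\ref{lemma:advantage-var-upper-bound}. Concretely, with the standard forms $\rvg^{i}_{\text{C}}=\sum_{t\ge 0}\gamma^{t}A_{\vtheta}(\rs_{t},\rva_{t})\nabla_{\theta^{i}}\log\pi^{i}_{\vtheta}(\ra^{i}_{t}|\rs_{t})$ and $\rvg^{i}_{\text{D}}$ obtained by replacing $A_{\vtheta}$ with $A^{i}_{\vtheta}(\rs_{t},\rva^{-i}_{t},\ra^{i}_{t})$, the definition of the multi-agent advantage gives the pointwise identity $A_{\vtheta}(s,\va)-A^{i}_{\vtheta}(s,\va^{-i},a^{i})=Q^{-i}_{\vtheta}(s,\va^{-i})-V_{\vtheta}(s)=:D^{i}_{\vtheta}(s,\va^{-i})$, which does not depend on $a^{i}$ and satisfies $\E_{\rva^{-i}\sim\boldsymbol{\pi}^{-i}_{\vtheta}}[D^{i}_{\vtheta}(s,\rva^{-i})]=0$. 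Since the two integrands differ only by the baseline term $D^{i}_{\vtheta}(s,\va^{-i})\nabla_{\theta^{i}}\log\pi^{i}_{\vtheta}(a^{i}|s)$, the baseline property recorded in the preliminary remarks gives $\E[\rvg^{i}_{\text{C}}]=\E[\rvg^{i}_{\text{D}}]$, so that $\mathbf{Var}[\rvg^{i}_{\text{C}}]-\mathbf{Var}[\rvg^{i}_{\text{D}}]=\E\|\rvg^{i}_{\text{C}}\|^{2}-\E\|\rvg^{i}_{\text{D}}\|^{2}$.

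Writing the residual $\rvz^{i}:=\rvg^{i}_{\text{C}}-\rvg^{i}_{\text{D}}=\sum_{t\ge0}\gamma^{t}D^{i}_{\vtheta}(\rs_{t},\rva^{-i}_{t})\nabla_{\theta^{i}}\log\pi^{i}_{\vtheta}(\ra^{i}_{t}|\rs_{t})$ and expanding $\|\rvg^{i}_{\text{D}}+\rvz^{i}\|^{2}$ yields $\mathbf{Var}[\rvg^{i}_{\text{C}}]-\mathbf{Var}[\rvg^{i}_{\text{D}}]=\E\|\rvz^{i}\|^{2}+2\E\langle\rvg^{i}_{\text{D}},\rvz^{i}\rangle$. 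For the first term I would show its summands are pairwise uncorrelated: for $t'>t$, conditioning on the trajectory through $\rs_{t'}$ and on $\rva^{-i}_{t'}$ makes $D^{i}_{\vtheta}(\rs_{t'},\rva^{-i}_{t'})$ measurable, while $\E[\nabla_{\theta^{i}}\log\pi^{i}_{\vtheta}(\ra^{i}_{t'}|\rs_{t'})\mid\cdot]=\vzero$ annihilates the inner product. Hence $\E\|\rvz^{i}\|^{2}=\sum_{t\ge0}\gamma^{2t}\E\big[D^{i}_{\vtheta}(\rs_{t},\rva^{-i}_{t})^{2}\,\|\nabla_{\theta^{i}}\log\pi^{i}_{\vtheta}(\ra^{i}_{t}|\rs_{t})\|^{2}\big]$. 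Conditioning on $\rs_{t}$ and using $\rva^{-i}_{t}\perp\ra^{i}_{t}$, each summand factorizes; I bound the score factor by $B_{i}^{2}$ and, because $\E[D^{i}_{\vtheta}\mid\rs_{t}]=0$, identify $\E[D^{i}_{\vtheta}(\rs_{t},\rva^{-i}_{t})^{2}\mid\rs_{t}]=\mathbf{Var}_{\rva^{-i}_{t}}[D^{i}_{\vtheta}]$. Since $D^{i}_{\vtheta}$ is precisely the multi-agent advantage of the coalition $-i$, the strong version of Lemma~\ref{lemma:advantage-var-upper-bound} together with $|A^{j}_{\vtheta}|\le\epsilon_{j}$ bounds it by $\sum_{j\ne i}\epsilon_{j}^{2}$. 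Summing $\sum_{t}\gamma^{2t}=(1-\gamma^{2})^{-1}$ produces exactly $\tfrac{B_{i}^{2}}{1-\gamma^{2}}\sum_{j\ne i}\epsilon_{j}^{2}$, and $\epsilon_{j}\le\epsilon$ with $n-1$ summands gives the second inequality.

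The main obstacle is the cross term $\E\langle\rvg^{i}_{\text{D}},\rvz^{i}\rangle$, which must be non-positive for the bound on $\E\|\rvz^{i}\|^{2}$ to be the final answer. Its ``past'' contributions vanish for the same reason as above: in a pair $(t',t)$ with $t'<t$, integrating $\ra^{i}_{t}$ against the zero-mean score kills the term. The delicate contributions are the simultaneous ($t'=t$) and ``future'' ($t'>t$) ones, where the score at time $t$ is coupled to $\rva^{-i}_{t}$ both through $A^{i}_{\vtheta}$ and through the downstream states, so the baseline trick no longer applies directly and the term reduces to a covariance of the form $\E_{\rva^{-i}}\!\big[D^{i}_{\vtheta}\,\E_{\ra^{i}}[A^{i}_{\vtheta}\,\|\nabla_{\theta^{i}}\log\pi^{i}_{\vtheta}(\ra^{i}|\rs)\|^{2}]\big]$, which carries no obvious sign termwise. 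I expect signing this cross term to be the crux of the proof; the intended route is to exploit that $D^{i}_{\vtheta}=\E_{\ra^{i}}[A_{\vtheta}\mid\rs,\rva^{-i}]$ is exactly the $\rva^{-i}$-conditional mean removed by the counterfactual baseline, so that $\rvg^{i}_{\text{D}}$ is the $\rva^{-i}$-orthogonalized version of $\rvg^{i}_{\text{C}}$ and plausibly $\E\langle\rvg^{i}_{\text{D}},\rvz^{i}\rangle\le0$; making this orthogonality-type step rigorous is the main difficulty, and once it is secured the bound on $\E\|\rvz^{i}\|^{2}$ closes the argument, with everything else being bookkeeping on top of Lemma~\ref{lemma:advantage-var-upper-bound}.
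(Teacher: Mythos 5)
Your plan stalls exactly where you say it does, and the obstruction is real rather than technical: with your definition of $\rvg^{i}_{\text{D}}$, the cross term $\E\langle\rvg^{i}_{\text{D}},\rvz^{i}\rangle$ genuinely carries no sign. The root cause is that you have misidentified the decentralised estimator. In the paper, $\rvg^{i}_{\text{D},t}=\hat{Q}^{i}(\rs_{t},\ra^{i}_{t})\nabla_{\theta^{i}}\log\pi^{i}_{\vtheta}(\ra^{i}_{t}|\rs_{t})$, where $\hat{Q}^{i}(s,a^{i})=\E_{\rva^{-i}\sim\boldsymbol{\pi}^{-i}_{\vtheta}}\big[\hat{Q}(s,\rva^{-i},a^{i})\big]$ is a function of $(s,a^{i})$ alone; the counterfactual-advantage estimator $A^{i}_{\vtheta}(s,\rva^{-i},\ra^{i})\nabla_{\theta^{i}}\log\pi^{i}_{\vtheta}$ that you substituted for it is the COMA estimator, which is the subject of Theorem~\ref{theorem:coma-dt}, not of this statement. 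That substitution is precisely what destroys the orthogonality you were hoping for: your $\rvg^{i}_{\text{D}}$ depends on $\rva^{-i}$, so conditioning on $(\rs,\ra^{i})$ does not fix it, and you are left with the covariance $\E_{\rva^{-i}}\big[D^{i}_{\vtheta}\,\E_{\ra^{i}}[A^{i}_{\vtheta}\,\|\nabla_{\theta^{i}}\log\pi^{i}_{\vtheta}\|^{2}]\big]$ that, as you correctly observe, cannot be signed termwise.

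With the paper's definition, the step you flagged as the crux is a one-line conditional expectation, and the cross term vanishes identically rather than merely being non-positive. Per time step at a fixed state, the residual is $\rvz^{i}_{t}=\big(\hat{Q}(\rs_{t},\rva_{t})-\hat{Q}^{i}(\rs_{t},\ra^{i}_{t})\big)\nabla_{\theta^{i}}\log\pi^{i}_{\vtheta}(\ra^{i}_{t}|\rs_{t})$; since agents' actions are sampled independently given the state, $\E[\rvz^{i}_{t}\mid\rs_{t},\ra^{i}_{t}]=\nabla_{\theta^{i}}\log\pi^{i}_{\vtheta}\cdot\E_{\rva^{-i}}[\hat{Q}-\hat{Q}^{i}]=\vzero$, while $\rvg^{i}_{\text{D},t}$ is measurable with respect to $(\rs_{t},\ra^{i}_{t})$, so $\E\langle\rvg^{i}_{\text{D},t},\rvz^{i}_{t}\rangle=0$. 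The paper packages this as follows: the law of total variance removes the state contribution (both estimators have the same action-conditional expectation), and then, componentwise at a fixed state, the difference of variances equals $\E\big[(\partial\log\pi^{i}_{\vtheta}/\partial\theta^{i}_{j})^{2}(\hat{Q}-\hat{Q}^{i})^{2}\big]$ exactly — the identity $\E[X^{2}-Y^{2}]=\E[(X-Y)^{2}]$ holding because $\hat{Q}^{i}$ is the $\rva^{-i}$-conditional mean of $\hat{Q}$ given $\ra^{i}$. From there your remaining steps coincide with the paper's: $\hat{Q}-\hat{Q}^{i}=\hat{A}^{-i}$, the strong version of Lemma~\ref{lemma:advantage-var-upper-bound} bounding $\E_{\ra^{i}}\big[\mathbf{Var}_{\rva^{-i}}[\hat{A}^{-i}]\big]$ by $\sum_{j\neq i}\epsilon_{j}^{2}$, the factor $B_{i}^{2}$, the per-step-to-trajectory passage via $\sum_{t}\gamma^{2t}=(1-\gamma^{2})^{-1}$ (your cross-time uncorrelatedness argument for the residual is in fact more careful than the paper's bare assertion that the variance of the discounted sum is the sum of discounted per-step variances), and $\epsilon_{j}\le\epsilon$ for the final inequality. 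So your bound on $\E\|\rvz^{i}\|^{2}$ and all the bookkeeping are sound, but as written the proposal cannot close; once the definition of $\rvg^{i}_{\text{D}}$ is corrected, the orthogonalization you conjectured is exact and immediate, and your argument becomes essentially the paper's.
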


\begin{proof}
    \label{proof:grad-var-bound}
    It suffices to prove the first inequality, as the second one is a trivial upper bound. Let's consider an arbitrary time step $t\geq 0$. Let 
    \begin{align}
        &\rvg^{i}_{\text{C}, t} = \hat{Q}(\rs_{t}, \rva_{t})\nabla_{\theta^{i}}\log\pi^{i}_{\vtheta}(\rs_{t}, \ra^{i}_{t})\nonumber\\
        &\rvg^{i}_{\text{D}, t} = \hat{Q}^{i}(\rs_{t}, \ra^{i}_{t})\nabla_{\theta^{i}}\log\pi^{i}_{\vtheta}(\rs_{t}, \ra^{i}_{t})\nonumber
    \end{align}
    be the contributions to the centralised and decentralised gradient estimators coming from sampling $\rs_{t}\sim d_{\vtheta}^{t}$, $\rva_{t}\sim\boldsymbol{\pi}_{\vtheta}$. Note that
    \begin{align}
        &\rvg^{i}_{\text{C}} = \sum_{t=0}^{\infty}\gamma^{t}\rvg^{i}_{\text{C}, t} \ \ \ \ \ \ \ \ \text{and}  \ \ \ \ \ \ \ \ 
        \rvg^{i}_{\text{D}} = \sum_{t=0}^{\infty}\gamma^{t}\rvg^{i}_{\text{D}, t}\nonumber
    \end{align}
    Moreover, let $\rg^{i}_{\text{C}, t, j}$ and $\rg^{i}_{\text{D}, t, j}$ be the $j^{th}$ components of $\rvg^{i}_{\text{C}, t}$ and $\rvg^{i}_{\text{D}, t}$, respectively.
    Using the law of total variance, we have
    \begin{align}
        \label{eq:var-to-exp}
        &\mathbf{Var}_{\rs\sim d_{\vtheta}^{t}, \rva\sim\boldsymbol{\pi}_{\vtheta}}\left[ 
        \rg^{i}_{\text{C}, t, j} \right] 
        - 
        \mathbf{Var}_{\rs\sim d_{\vtheta}^{t}, \rva\sim\boldsymbol{\pi}_{\vtheta}}\left[ \rg^{i}_{\text{D}, t, j} \right]
        \nonumber\\
        &= \left( \mathbf{Var}_{\rs\sim d_{\vtheta}^{t}}\left[ \E_{\rva\sim\boldsymbol{\pi}_{\vtheta}}\left[ \rg^{i}_{C, t, j} \right] \right] 
        +
        \E_{\rs\sim d_{\vtheta}^{t}}\left[ \mathbf{Var}_{\rva\sim\boldsymbol{\pi}_{\vtheta}}
        \left[ \rg^{i}_{\text{C}, t, j} \right] 
        \right] \right) 
        \nonumber\\
        & \ \ \ \ \ \ \ - \left( \mathbf{Var}_{\rs\sim d_{\vtheta}^{t}}\left[ \E_{\rva\sim\boldsymbol{\pi}_{\vtheta}}\left[ \rg^{i}_{\text{D}, t, j} \right] \right] +
        \E_{\rs\sim d_{\vtheta}^{t}}\left[ \mathbf{Var}_{\rva\sim\boldsymbol{\pi}_{\vtheta}}\left[ \rg^{i}_{\text{D}, t, j} \right] \right] \right)
    \end{align}
    Noting that $\rvg^{i}_{\text{C}}$ and $\rvg^{i}_{\text{D}}$ have the same expectation over $\rva\sim\boldsymbol{\pi}_{\vtheta}$, the above simplifies to 
    \begin{align}
        \label{eq:exp-to-state}
        &\E_{\rs\sim d_{\vtheta}^{t}}\left[ \mathbf{Var}_{\rva\sim\boldsymbol{\pi}_{\vtheta}}\left[ \rg^{i}_{\text{C}, t, j} \right] 
        \right] - 
        \E_{\rs\sim d_{\vtheta}^{t}}\left[ \mathbf{Var}_{\rva\sim\boldsymbol{pi}_{\vtheta}}\left[ 
        \rg^{i}_{\text{D}, t, j} 
        \right] \right] \nonumber\\
        &= \E_{\rs\sim d_{\vtheta}^{t}}\left[ \mathbf{Var}_{\rva\sim\boldsymbol{\pi}_{\vtheta}}\left[ \rg^{i}_{\text{C}, t, j} \right] - \mathbf{Var}_{\rva\sim\pi_{\theta}}\left[ \rg^{i}_{\text{D}, t, j} \right] 
        \right]
    \end{align}
    Let's fix a state $s$. Using (again) the fact that the expectations of the two gradients are the same, we have
    \begin{align}
        &\mathbf{Var}_{\rva\sim\boldsymbol{\pi}_{\vtheta}}\left[ \rg^{i}_{\text{C}, t, j} \right] - \mathbf{Var}_{\rva\sim\boldsymbol{\pi}_{\vtheta}}\left[ \rg^{i}_{\text{D}, t, j} \right]\nonumber\\
        &= \left( 
        \E_{\rva\sim\boldsymbol{\pi}_{\vtheta}}\left[ \left(\rg^{i}_{\text{C}, t, j}\right)^{2} \right]
        -
        \E_{\rva\sim\boldsymbol{\pi}_{\vtheta}}\left[ \rg^{i}_{\text{C}, t, j} \right]^{2}
        \right)
        -
        \left( 
        \E_{\rva\sim\boldsymbol{\pi}_{\vtheta}}\left[ \left(\rg^{i}_{\text{D}, t, j}\right)^{2} \right]
        -
        \E_{\rva\sim\boldsymbol{\pi}_{\vtheta}}\left[ \rg^{i}_{\text{D}, t, j} \right]^{2}
        \right)\nonumber\\
        &= \E_{\rva\sim\boldsymbol{\pi}_{\vtheta}}\left[ \left(\rg^{i}_{\text{C}, t, j}\right)^{2}\right] -  \E_{\rva\sim\boldsymbol{\pi}_{\vtheta}}\left[ \left(\rg^{i}_{\text{D}, t, j}\right)^{2} \right]
        \nonumber\\
        &= \E_{\rva\sim\boldsymbol{\pi}_{\vtheta}}\left[ \left(\rg^{i}_{\text{C}, t, j}\right)^{2} - \left(\rg^{i}_{\text{D}, t, j}\right)^{2} \right]\nonumber\\
        & = \E_{\rva\sim\boldsymbol{\pi}_{\vtheta}}\left[ \left(
        \frac{\partial\log\pi^{i}_{\vtheta}(\ra^{i}|s)}{\partial \theta^{i}}\hat{Q}(s, \rva)
        \right)^{2} - 
        \left(
        \frac{\partial\log\pi^{i}_{\vtheta}(\ra^{i}|s)}{\partial \theta^{i}}\hat{Q}^{i}(s, \ra^{i})
        \right)^{2} 
        \right]\nonumber\\
        & = \E_{\rva\sim\boldsymbol{\pi}_{\vtheta}}\left[ \left(
        \frac{\partial\log\pi^{i}_{\vtheta}(\ra^{i}|s)}{\partial \theta^{i}}\right)^{2}\left(\hat{Q}(s, \rva)
        - \hat{Q}^{i}(s, \ra^{i})
        \right)^{2} 
        \right]\nonumber
    \end{align}
    Now, recalling that the variance of the total gradient is the sum of variances of the gradient components, we have 
    \begin{align}
        &\mathbf{Var}_{\rva\sim\boldsymbol{\pi}_{\vtheta}}\left[ \rvg^{i}_{\text{C}, t} \right] - \mathbf{Var}_{\rva\sim\boldsymbol{\pi}_{\vtheta}}\left[ \rvg^{i}_{\text{D}, t} \right] 
        = \E_{\rva\sim\boldsymbol{\pi}_{\vtheta}}\left[ \left|\left|
        \nabla_{\theta^{i}}\log\pi^{i}_{\vtheta}(\ra^{i}|s)
        \right|\right|^{2}
        \left(\hat{Q}(s, \rva)
        - \hat{Q}^{i}(s, \ra^{i})
        \right)^{2} 
        \right] \nonumber\\
        &  \leq B_{i}^{2} \ \E_{\rva\sim\boldsymbol{\pi}_{\vtheta}}\left[ \left(\hat{Q}(s, \rva)
        - \hat{Q}^{i}(s, \ra^{i})
        \right)^{2} 
        \right]
        = B_{i}^{2} \ \E_{\ra^{i}\sim\pi^{i}_{\vtheta}}\left[
        \E_{\rva^{-i}\sim\boldsymbol{\pi}^{-i}_{\vtheta}}
        \left[
        \left(\hat{Q}(s, \rva)
        - \hat{Q}^{i}(s, \ra^{i})
        \right)^{2} 
        \right]
        \right] \nonumber\\
        &= B_{i}^{2} \ \E_{\ra^{i}\sim\pi^{i}_{\vtheta}}\left[
        \E_{\rva^{-i}\sim\boldsymbol{\pi}^{-i}_{\vtheta}}
        \left[
        \left(\hat{Q}(s, \ra^{i}, \rva^{-i})
        - \hat{Q}^{i}(s, \ra^{i})
        \right)^{2} 
        \right]
        \right]\nonumber\\
        &= B_{i}^{2} \ \E_{\ra^{i}\sim\pi^{i}_{\vtheta}}\left[
        \E_{\rva^{-i}\sim\boldsymbol{\pi}^{-i}_{\vtheta}}
        \left[
        \hat{A}^{-i}(s, \ra^{i}, \rva^{-i})^{2}
        \right]
        \right] =
        B_{i}^{2} \ \E_{\ra^{i}\sim\pi^{i}_{\vtheta}}
        \left[
        \mathbf{Var}_{\rva^{-i}\sim\boldsymbol{\pi}^{-i}_{\vtheta}}
        \left[
        \hat{A}^{-i}(s, \ra^{i}, \rva^{-i})
        \right]
        \right]\nonumber
    \end{align}
    which by the strong version of Lemma \ref{lemma:advantage-var-upper-bound}, given in Equation \ref{eq:advantage-var-upper-bound-stronger}, can be upper-bounded by
    \begin{align}
        &B_{i}^{2} \ \E_{\ra^{i}\sim\pi^{i}_{\vtheta}}
        \left[
        \sum_{j\neq i}\mathbf{Var}_{\rva^{-i}\sim\boldsymbol{\pi}^{-i}_{\vtheta}}\left[ 
        \hat{A}^{j}(s, \rva^{-j}, \ra^{j}) \right]
        \right]
        =
        B_{i}^{2}
        \sum_{j\neq i}
        \ \E_{\ra^{i}\sim\pi^{i}_{\vtheta}}
        \left[
        \mathbf{Var}_{\rva^{-i}\sim\boldsymbol{\pi}^{-i}_{\vtheta}}\left[ 
        \hat{A}^{j}(s, \rva^{-j}, \ra^{j}) \right]
        \right]
        \nonumber
    \end{align}
    Notice that, for any $j\neq i$, we have
    \begin{align}
        &\E_{\ra^{i}\sim\pi^{i}_{\vtheta}}
        \left[
        \mathbf{Var}_{\rva^{-i}\sim\boldsymbol{\pi}^{-i}_{\vtheta}}\left[ 
        \hat{A}^{j}(s, \rva^{-j}, \ra^{j}) \right]
        \right]\nonumber\\
        &= \E_{\ra^{i}\sim\pi^{i}_{\vtheta}}
        \left[
        \E_{\rva^{-i}\sim\boldsymbol{\pi}^{-i}_{\vtheta}}\left[ 
        \hat{A}^{j}(s, \rva^{-j}, \ra^{j})^{2} \right]
        \right]\nonumber\\
        &= \E_{\rva\sim\boldsymbol{\pi}_{\vtheta}}
        \left[ \hat{A}^{j}(s, \rva^{-j}, \ra^{j})^{2} \right] \ \leq \ \epsilon_{j}^{2}
        \nonumber
    \end{align}
    This gives
    \begin{align}
        &\mathbf{Var}_{\rva\sim\boldsymbol{\pi}_{\vtheta}}\left[ \rvg^{i}_{\text{C}, t} \right] 
        - 
        \mathbf{Var}_{\rva\sim\boldsymbol{\pi}_{\vtheta}}\left[ \rvg^{i}_{\text{D}, t} \right] \ \leq \ B_{i}^{2}\sum_{j\neq i}\epsilon_{j}^{2}\nonumber
    \end{align}
    and combining it with Equations \ref{eq:var-to-exp} and \ref{eq:exp-to-state} for entire gradient vectors, we get
    \begin{align}  
        \label{eq:per-step-bound}
         &\mathbf{Var}_{\rs\sim d^{t}_{\vtheta}, \rva\sim\boldsymbol{\pi}_{\vtheta}}\left[ \rvg^{i}_{\text{C}, t} \right] 
        - 
        \mathbf{Var}_{\rs\sim d^{t}_{\vtheta}, \rva\sim\boldsymbol{\pi}_{\vtheta}}\left[ \rvg^{i}_{\text{D}, t} \right] \ \leq \ B_{i}^{2}\sum_{j\neq i}\epsilon_{j}^{2}
    \end{align}
    Noting that 
    \begin{align}
    &\mathbf{Var}_{\rs_{0:\infty} \sim d_{\vtheta}^{0:\infty}, \rva_{0:\infty}\sim\boldsymbol{\pi}_{\vtheta} }\left[ \rvg_{\cdot}^{i} \right]
    =
    \mathbf{Var}_{\rs_{0:\infty} \sim d_{\vtheta}^{0:\infty}, \rva_{0:\infty}\sim\boldsymbol{\pi}_{\vtheta} }
    \left[ \sum_{t=0}^{\infty}\gamma^{t}\rvg_{\cdot, t}^{i} \right]\nonumber\\
    &= \sum_{t=0}^{\infty}\mathbf{Var}_{\rs_{t} \sim d_{\vtheta}^{t}, \rva\sim\boldsymbol{\pi}_{\vtheta}}
    \left[ \gamma^{t}\rvg_{\cdot, t}^{i} \right] 
    =
    \sum_{t=0}^{\infty}\gamma^{2t}\mathbf{Var}_{\rs_{t} \sim d_{\vtheta}^{t}, \rva\sim\boldsymbol{\pi}_{\vtheta}}\left[\rvg_{\cdot, t}^{i} \right]\nonumber
    \end{align}
    Combining this series expansion with the estimate from Equation \ref{eq:per-step-bound}, we finally obtain
    \begin{align}
    &\mathbf{Var}_{\rs_{0:\infty} \sim d_{\vtheta}^{0:\infty}, \rva_{0:\infty}\sim\boldsymbol{\pi}_{\vtheta} }\left[ \rvg^{i}_{\text{C}} \right] 
    - 
    \mathbf{Var}_{\rs_{0:\infty} \sim d_{\vtheta}^{0:\infty}, \rva_{0:\infty}\sim\boldsymbol{\pi}_{\vtheta} }\left[ \rvg^{i}_{\text{D}} \right]\nonumber\\ 
    &\ \ \ \ \ \ \ \ \ \ \ \ \ \ \ \leq \ \sum_{t=0}^{\infty}\gamma^{2t}\left(B_{i}^{2}\sum_{j\neq i}\epsilon_{j}^{2}\right)
    \leq \ \frac{B_{i}^{2}}{1-\gamma^{2}}\sum_{j\neq i}\epsilon_{j}^{2}\nonumber
    \end{align}
\end{proof}

\begin{restatable}{theorem}{comadttheorem}
    \label{theorem:coma-dt}
     The COMA and DT estimators of MAPG satisfy
     \vspace{-5pt}
     \begin{align}
         \mathbf{Var}_{
         \rs_{0:\infty} \sim d_{\vtheta}^{0:\infty}, \rva_{0:\infty}\sim\boldsymbol{\pi}_{\vtheta} }\left[ \rvg^{i}_{\text{COMA}} \right] 
        - 
        \mathbf{Var}_{
        \rs_{0:\infty} \sim d_{\vtheta}^{0:\infty}, \rva_{0:\infty}\sim\boldsymbol{\pi}_{\vtheta}
        }\left[ \rvg^{i}_{\text{D}} \right] 
        \ \leq \ \frac{\left(\epsilon_{i}B_{i}\right)^{2}}{1-\gamma^{2}} 
        \nonumber
     \end{align}
\end{restatable}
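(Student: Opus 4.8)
The plan is to mirror the proof of Theorem~\ref{theorem:excess-variance} almost verbatim, exploiting that the COMA estimator uses the counterfactual advantage weight $\hat{A}^{i}(s,\rva^{-i},\ra^{i})$ whereas the DT estimator uses $\hat{Q}^{i}(s,\ra^{i})$, so that the two differ only through a zero-mean baseline. First I would introduce the per-step contributions $\rvg^{i}_{\text{COMA},t} = \hat{A}^{i}(\rs_{t},\rva^{-i}_{t},\ra^{i}_{t})\nabla_{\theta^{i}}\log\pi^{i}_{\vtheta}(\ra^{i}_{t}|\rs_{t})$ and $\rvg^{i}_{\text{D},t}=\hat{Q}^{i}(\rs_{t},\ra^{i}_{t})\nabla_{\theta^{i}}\log\pi^{i}_{\vtheta}(\ra^{i}_{t}|\rs_{t})$, and, using the independence of samples across time steps, reduce the global variance gap to $\sum_{t=0}^{\infty}\gamma^{2t}$ times the per-step gap, exactly as in the series expansion at the end of the proof of Theorem~\ref{theorem:excess-variance}.

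The crux is then to bound the single-step quantity $\mathbf{Var}_{\rs\sim d^{t}_{\vtheta},\rva\sim\boldsymbol{\pi}_{\vtheta}}[\rvg^{i}_{\text{COMA},t}] - \mathbf{Var}_{\rs\sim d^{t}_{\vtheta},\rva\sim\boldsymbol{\pi}_{\vtheta}}[\rvg^{i}_{\text{D},t}]$. I would first observe that $\hat{A}^{i}(s,\rva^{-i},\ra^{i}) = \hat{Q}(s,\rva) - b(s,\rva^{-i})$, where $b(s,\rva^{-i})=\E_{a^{i}\sim\pi^{i}_{\vtheta}}[\hat{Q}(s,\rva^{-i},a^{i})]$ is the counterfactual baseline; hence, by the zero-bias baseline property in Appendix~\ref{appendix:remarks} (and the matching-expectation argument already used for $\rvg^{i}_{\text{C}}$ and $\rvg^{i}_{\text{D}}$), the COMA and DT estimators share the same conditional mean $\E_{\rva\sim\boldsymbol{\pi}_{\vtheta}}[\cdot]$ at every fixed $s$. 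Applying the law of total variance to condition on $\rs$, the $\mathbf{Var}_{\rs}[\E_{\rva}[\cdot]]$ terms cancel, and the remaining per-state variance gap collapses to a difference of second moments, summed over gradient components, namely $\E_{\rva\sim\boldsymbol{\pi}_{\vtheta}}\big[\|\nabla_{\theta^{i}}\log\pi^{i}_{\vtheta}(\ra^{i}|s)\|^{2}\big((\hat{A}^{i}(s,\rva^{-i},\ra^{i}))^{2} - (\hat{Q}^{i}(s,\ra^{i}))^{2}\big)\big]$.

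The key step — and the one that differs from Theorem~\ref{theorem:excess-variance} — is bounding this expectation. Unlike the CTDE case, the bracket $(\hat{A}^{i})^{2}-(\hat{Q}^{i})^{2}$ is \emph{not} sign-definite, so the multi-agent advantage variance decomposition (Lemma~\ref{lemma:advantage-var-upper-bound}) is not needed; instead I would use a pointwise estimate. Since $\|\nabla_{\theta^{i}}\log\pi^{i}_{\vtheta}\|^{2}\le B_{i}^{2}$ and $(\hat{Q}^{i}(s,\ra^{i}))^{2}\ge 0$, for every $\rva$ we have $\|\nabla_{\theta^{i}}\log\pi^{i}_{\vtheta}\|^{2}\big((\hat{A}^{i})^{2}-(\hat{Q}^{i})^{2}\big) \le B_{i}^{2}\,\hat{A}^{i}(s,\rva^{-i},\ra^{i})^{2} \le B_{i}^{2}\epsilon_{i}^{2}$, where the last inequality is the uniform bound $|\hat{A}^{i}|\le\epsilon_{i}$ coming from the definition of $\epsilon_{i}$ together with Remark~\ref{remark:bounded-maad}. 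Taking expectations over $\rva$ and then over $\rs$ preserves the bound $(\epsilon_{i}B_{i})^{2}$ on the per-step gap, and summing the geometric series $\sum_{t}\gamma^{2t}(\epsilon_{i}B_{i})^{2}$ yields $\frac{(\epsilon_{i}B_{i})^{2}}{1-\gamma^{2}}$, as claimed.

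I anticipate the main obstacle is not the algebra but the conceptual recognition that the indefinite sign of $(\hat{A}^{i})^{2}-(\hat{Q}^{i})^{2}$ is harmless: dropping the non-negative $(\hat{Q}^{i})^{2}$ term and controlling $\hat{A}^{i}$ by its uniform bound $\epsilon_{i}$ already suffices, so no decomposition of the joint advantage across agents is required. The only point demanding care is verifying that COMA's counterfactual weight really is a pure baseline shift of the centralised weight $\hat{Q}$, so that the two estimators share a conditional mean and the variance gap legitimately reduces to the second-moment gap.
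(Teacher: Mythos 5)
Your proposal is correct and follows essentially the same route as the paper's proof: reduce the global gap to per-step, per-state gaps via the law of total variance and the matching conditional means, bound the COMA term by $\left(\epsilon_{i}B_{i}\right)^{2}$ using $\left|\left|\nabla_{\theta^{i}}\log\pi^{i}_{\vtheta}\right|\right|\leq B_{i}$ together with $\left|\hat{A}^{i}\right|\leq\epsilon_{i}$, and sum the geometric series $\sum_{t}\gamma^{2t}$. The only cosmetic difference is that the paper drops the nonnegative DT variance $\mathbf{Var}_{\rva\sim\boldsymbol{\pi}_{\vtheta}}\left[\rg^{i}_{\text{D},t,j}\right]$ immediately (``trivially upper-bounded''), whereas you first pass to the second-moment gap and then discard $\left(\hat{Q}^{i}\right)^{2}$ pointwise --- an equivalent non-negativity argument yielding the identical bound.
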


\begin{proof}
    \label{proof:coma-de}
    Just like in the proof of Theorem \ref{theorem:excess-variance}, we start with the difference
    \begin{align}
        \mathbf{Var}_{\rs\sim d^{t}_{\vtheta}, \ra\sim\boldsymbol{\pi}_{\vtheta}}\left[ \rg^{i}_{\text{COMA}, t, j}\right]
        -
        \mathbf{Var}_{\rs\sim d^{t}_{\vtheta}, \ra\sim\boldsymbol{\pi}_{\vtheta}}\left[ \rg^{i}_{\text{D}, t, j}\right]\nonumber
    \end{align}
    which we transform to an analogue of Equation \ref{eq:exp-to-state}:
    \begin{align}
        \E_{\rs\sim d^{t}_{\vtheta}}\left[ 
        \mathbf{Var}_{\rva\sim\boldsymbol{\pi}_{\vtheta}}\left[ \rg^{i}_{\text{COMA}, t, j}\right] - \mathbf{Var}_{\rva\sim\boldsymbol{\pi}_{\vtheta}}\left[\rg^{i}_{\text{D}, t, j}\right]
        \right]\nonumber
    \end{align}
    which is trivially upper-bounded by
    \begin{align}
        \E_{\rs\sim d^{t}_{\vtheta}}\left[ 
        \mathbf{Var}_{\rva\sim\boldsymbol{\pi}_{\vtheta}}\left[ \rg^{i}_{\text{COMA}, t, j}\right] 
        \right]\nonumber
    \end{align}
    Now, let us fix a state $s$. We have
    \begin{align}
        &\mathbf{Var}_{\rva\sim\boldsymbol{\pi}_{\vtheta}}
        \left[ \rg^{i}_{\text{COMA}, t, j}\right]
        =
        \mathbf{Var}_{\rva^{-i}\sim\boldsymbol{\pi}^{-i}_{\vtheta}, \ra^{i}\sim\pi^{i}_{\vtheta}}
        \left[ \frac{\partial \log\pi^{i}_{\vtheta}(\ra^{i}|s) }{\partial \theta^{i}_{j}} A^{i}(s, \rva^{-i}, \ra^{i}) \right]
        \nonumber\\
        &\leq \ \E_{\rva^{-i}\sim\boldsymbol{\pi}^{-i}_{\vtheta}, \ra^{i}\sim\pi^{i}_{\vtheta}}
        \left[ 
        \left(\frac{\partial \log\pi^{i}_{\vtheta}(\ra^{i}|s) }{\partial \theta^{i}_{j}} \right)^{2}A^{i}(s, \rva^{-i}, \ra^{i})^{2}\right]\nonumber\\
        &\leq \epsilon^{2}_{i} \ 
        \E_{\rva^{-i}\sim\boldsymbol{\pi}^{-i}_{\vtheta}, \ra^{i}\sim\pi^{i}_{\vtheta}}
        \left[ 
        \left(\frac{\partial \log\pi^{i}_{\vtheta}(\ra^{i}|s) }{\partial \theta^{i}_{j}} \right)^{2}\right]
    \end{align}
which summing over all components of $\theta^{i}$ gives
\begin{align}
    &\mathbf{Var}_{\rva\sim\boldsymbol{\pi}_{\vtheta}}
    \left[ \rvg^{i}_{\text{COMA}, t}\right] \ \leq \ \left( \epsilon_{i}B_{i}\right)^{2}\nonumber
\end{align}
Now, applying the reasoning from Equation \ref{eq:per-step-bound} until the end of the proof of Theorem \ref{theorem:excess-variance}, we arrive at the result
\begin{align}
         \mathbf{Var}_{\rs_{0:\infty} \sim d_{\vtheta}^{0:\infty}, \rva_{0:\infty}\sim\boldsymbol{\pi}_{\vtheta} }\left[ \rvg^{i}_{\text{COMA}} \right] 
        - 
        \mathbf{Var}_{\rs_{0:\infty} \sim d_{\vtheta}^{0:\infty}, \rva_{0:\infty}\sim\boldsymbol{\pi}_{\vtheta} }\left[ \rvg^{i}_{\text{D}} \right] 
        \ \leq \ \frac{\left(\epsilon_{i}B_{i}\right)^{2}}{1-\gamma^{2}} \nonumber
     \end{align}
\end{proof}

\section{Proofs of the Results about Optimal Baselines}
\label{appendix:proofs-ob}

In this section of the Appendix we prove the results about optimal baselines, which are those that minimise the CTDE MAPG estimator's variance. We rely on the following variance decomposition
\begin{smalleralign}[\scriptsize]
    \label{eq:grad-var-decomp}
    &\mathbf{Var}_{\rs_{t}\sim d^{t}_{\vtheta}, \rva_{t}\sim\boldsymbol{\pi}_{\vtheta}}\left[ \rvg_{\text{C}, t}^{i}(b) \right] = \mathbf{Var}_{\rs_{t}\sim d^{t}_{\vtheta}}\left[ \E_{\rva_{t}\sim\boldsymbol{\pi}_{\vtheta}}\left[ \rvg_{\text{C}, t}^{i}(b) \right] \right] + \E_{\rs_{t}\sim d^{t}_{\vtheta}}\left[ \mathbf{Var}_{\rva_{t}\sim\boldsymbol{\pi}_{\vtheta}}\left[ \rvg_{\text{C}, t}^{i}(b) \right] \right] \\
    &=  \mathbf{Var}_{\rs_{t}\sim d^{t}_{\vtheta}}\left[ \E_{\rva_{t}\sim\boldsymbol{\pi}_{\vtheta}}\left[ \rvg_{\text{C}, t}^{i}(b) \right] \right] + \E_{\rs_{t}\sim d^{t}_{\vtheta}}\left[ \mathbf{Var}_{\rva_{t}^{-i}\sim\boldsymbol{\pi}^{-i}_{\vtheta}}\left[ \E_{\ra_{t}^{i}\sim\pi^{i}_{\vtheta}}\left[ \rvg^{i}_{\text{C}, t}(b) \right] \right]  +
    \E_{\rva_{t}^{-i}\sim\boldsymbol{\pi}^{-i}_{\vtheta}}\left[ \mathbf{Var}_{\ra_{t}^{i}\sim\pi^{i}_{\vtheta}}\left[ \rvg^{i}_{\text{C}, t}(b) \right] \right]  \right] 
    \nonumber\\
    &= \underbrace{\mathbf{Var}_{\rs_{t}\sim d^{t}_{\vtheta}}\left[ \E_{\rva_{t}\sim\boldsymbol{\pi}_{\vtheta}}\left[ \rvg_{\text{C}, t}^{i}(b) \right] \right]}_{\text{\footnotesize  Variance from state}} + 
    \underbrace{\E_{\rs_{t}\sim d^{t}_{\vtheta}}\left[ \mathbf{Var}_{\rva_{t}^{-i}\sim\boldsymbol{\pi}^{-i}_{\vtheta}}\left[ \E_{\ra_{t}^{i}\sim\pi^{i}_{\vtheta}}\left[ \rvg^{i}_{\text{C}, t}(b) \right] \right] \right]}_\text{\footnotesize Variance from other agents' actions} +  \underbrace{\E_{\rs_{t} \sim d^{t}_{\vtheta}, \rva_{t}^{-i}\sim\boldsymbol{\pi}^{-i}_{\vtheta}}\Big[ \mathbf{Var}_{\ra_{t}^{i}\sim\pi^{i}_{\vtheta}}\big[ \rvg^{i}_{\text{C}, t}(b) \big] \Big]}_\text{\footnotesize Variance from agent $i$'s action}. \nonumber
\end{smalleralign}
This decomposition reveals that baselines impact the variance via the local variance $\mathbf{Var}_{\ra^{i}_{t}\sim\pi^{i}_{\vtheta}}\left[ \rvg^{i}_{\text{C}, t}(b) \right]$. We rely on this fact in the proofs below.

\subsection{Proof of Theorem \ref{theorem:ob-marl}}
\begin{restatable}[Optimal baseline for MAPG]{theorem}{obtheorem}
\label{theorem:ob-marl}
The optimal  baseline (OB) for the  MAPG estimator is
\begin{align}
    b^{\text{optimal}}(\rs, \rva^{-i}) = \frac{ \E_{\ra^{i}\sim\pi^{i}_{\vtheta}}\left[ \hat{Q}(\rs, \rva^{-i}, \ra^{i}) \left|\left| 
    \nabla_{\theta^{i}}\log \pi^{i}_{\vtheta}(\ra^{i}|\rs) \right|\right|^{2} \right] }{  
    \E_{\ra^{i}\sim\pi^{i}_{\vtheta}}\Big[ \left|\left| \nabla_{\theta^{i}}\log \pi^{i}_{\vtheta}(\ra^{i}|\rs) \right|\right|^{2} \Big]
    }
    \label{eq:ob}
\end{align}
\end{restatable}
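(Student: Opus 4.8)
The plan is to exploit the three-way variance decomposition in Equation~\ref{eq:grad-var-decomp}, which already isolates the only term the baseline can touch. Writing the baselined per-step estimator as $\rvg^{i}_{\text{C}, t}(b) = \big( \hat{Q}(\rs_{t}, \rva_{t}) - b(\rs_{t}, \rva^{-i}_{t}) \big)\nabla_{\theta^{i}}\log\pi^{i}_{\vtheta}(\ra^{i}_{t}|\rs_{t})$, I would first invoke the baseline remark, namely $\E_{\ra^{i}\sim\pi^{i}_{\vtheta}}\big[ \nabla_{\theta^{i}}\log\pi^{i}_{\vtheta}(\ra^{i}|\rs) \big] = \vzero$, which says that any inner $\ra^{i}$-expectation annihilates the baseline contribution. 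The first two terms of Equation~\ref{eq:grad-var-decomp} (``Variance from state'' and ``Variance from other agents' actions'') both sit behind such an $\E_{\ra^{i}\sim\pi^{i}_{\vtheta}}$, so they are independent of $b$. Hence minimising the total estimator variance is equivalent to minimising only the third term, $\E_{\rs \sim d^{t}_{\vtheta}, \rva^{-i}\sim\boldsymbol{\pi}^{-i}_{\vtheta}}\big[ \mathbf{Var}_{\ra^{i}\sim\pi^{i}_{\vtheta}}[ \rvg^{i}_{\text{C}, t}(b) ] \big]$.

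Next I would reduce this to a pointwise scalar problem. Since $b(\rs, \rva^{-i})$ may be chosen freely and independently at each fixed pair $(s, \va^{-i})$, and the outer expectation averages the inner variance against a non-negative measure, it suffices to minimise the integrand $\mathbf{Var}_{\ra^{i}\sim\pi^{i}_{\vtheta}}[ \rvg^{i}_{\text{C}, t}(b) ]$ separately for every such pair, where $b$ is now a single scalar. Using the baseline remark once more, the mean $\E_{\ra^{i}\sim\pi^{i}_{\vtheta}}[\rvg^{i}_{\text{C}, t}(b)]$ does not depend on $b$, so the variance differs from the uncentred second moment only by a $b$-independent constant, and (recalling that the variance of a gradient vector is the sum of its component variances) it is enough to minimise
\begin{align}
    f(b) = \E_{\ra^{i}\sim\pi^{i}_{\vtheta}}\left[ \big( \hat{Q}(s, \va^{-i}, \ra^{i}) - b \big)^{2} \left|\left| \nabla_{\theta^{i}}\log\pi^{i}_{\vtheta}(\ra^{i}|s) \right|\right|^{2} \right]. \nonumber
\end{align}

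Finally I would note that $f$ is a convex quadratic in $b$, since $f''(b) = 2\,\E_{\ra^{i}\sim\pi^{i}_{\vtheta}}\big[ \left|\left| \nabla_{\theta^{i}}\log\pi^{i}_{\vtheta}(\ra^{i}|s) \right|\right|^{2} \big] \geq 0$, so its unique minimiser is found by setting $f'(b) = -2\,\E_{\ra^{i}\sim\pi^{i}_{\vtheta}}\big[ \big( \hat{Q}(s, \va^{-i}, \ra^{i}) - b \big)\left|\left| \nabla_{\theta^{i}}\log\pi^{i}_{\vtheta}(\ra^{i}|s) \right|\right|^{2} \big] = 0$ and solving for $b$, which yields precisely the ratio claimed in Equation~\ref{eq:ob}. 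The main obstacle is not the one-line calculus at the end but the two structural reductions preceding it: carefully verifying that the baseline drops out of the ``state'' and ``other agents'' variance terms, and justifying that minimising the inner variance pointwise in $(s, \va^{-i})$ indeed minimises the outer expectation; once these are secured, the scalar optimisation is routine.
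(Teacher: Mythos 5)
Your proposal is correct and follows essentially the same route as the paper: it uses the same three-way decomposition (Equation~\ref{eq:grad-var-decomp}) together with the zero-mean baseline property to reduce the problem to pointwise minimisation of the local variance $\mathbf{Var}_{\ra^{i}\sim\pi^{i}_{\vtheta}}\left[ \rvg^{i}_{\text{C},t}(b) \right]$, and then minimises the resulting convex quadratic in $b$. The only cosmetic difference is that you locate the minimiser by setting $f'(b)=0$, whereas the paper reads it off from the vertex of the quadratic after summing the componentwise variances; these are the same computation.
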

\vspace{-10pt}

\begin{proof}
    \label{proof:ob-marl}
    From the decomposition of the estimator's variance, we know that minimisation of the variance is equivalent to minimisation of the local variance 
    \begin{align}
    \mathbf{Var}_{\ra^{i}\sim\pi^{i}_{\vtheta}}
    \left[ 
    \left(\hat{Q}(s, \rva^{-i}, \ra^{i}) - b\right)
    \nabla_{\theta^{i}}\log\pi^{i}_{\vtheta}(\ra^{i}|s)
    \right]\nonumber
    \end{align}
    For a baseline $b$, we have
    \begin{align}
        &\mathbf{Var}_{\ra^{i}\sim\pi^{i}_{\vtheta}}
        \left[ 
        \left(
        \hat{Q}(s, \va^{-i}, \ra^{i}) - b)\right)
        \left(\frac{\partial\log\pi^{i}_{\vtheta}(\ra^{i}|s)}{\partial \theta^{i}_{j}}
        \right)
        \right] \nonumber\\
        &= \E_{\ra^{i}\sim\pi^{i}_{\vtheta}}
        \left[ 
        \left(
        \hat{Q}(s, \va^{-i}, \ra^{i}) - b)\right)^{2}
        \left(\frac{\partial\log\pi^{i}_{\vtheta}(\ra^{i}|s)}{\partial \theta^{i}_{j}}
        \right)^{2}
        \right]\nonumber\\
        &\ \ \ \ \ \ \ \ \ - 
        \E_{\ra^{i}\sim\pi^{i}_{\vtheta}}
        \left[ 
        \left(
        \hat{Q}(s, \va^{-i}, \ra^{i}) - b)\right)
        \left(\frac{\partial\log\pi^{i}_{\vtheta}(\ra^{i}|s)}{\partial \theta^{i}_{j}}
        \right)
        \right]^{2}\nonumber\\
        \label{term:quadratic-to-minimise}
        &= \E_{\ra^{i}\sim\pi^{i}_{\vtheta}}
        \left[ 
        \left(
        \hat{Q}(s, \va^{-i}, \ra^{i}) - b)\right)^{2}
        \left(\frac{\partial\log\pi^{i}_{\vtheta}(\ra^{i}|s)}{\partial \theta^{i}_{j}}
        \right)^{2}
        \right]\\
        &\ \ \ \ \ \ \ \ \ - 
        \E_{\ra^{i}\sim\pi^{i}_{\vtheta}}
        \left[ 
        \hat{Q}(s, \va^{-i}, \ra^{i}) 
        \left(\frac{\partial\log\pi^{i}_{\vtheta}(\ra^{i}|s)}{\partial \theta^{i}_{j}}
        \right)
        \right]^{2}\nonumber
    \end{align}
    as $b$ is a baseline. So in order to minimise variance, we shall minimise the term \ref{term:quadratic-to-minimise}.
    \begin{align}
        &\E_{\ra^{i}\sim\pi^{i}_{\vtheta}}
        \left[ 
        \left(
        \hat{Q}(s, \va^{-i}, \ra^{i}) - b)\right)^{2}
        \left(\frac{\partial\log\pi^{i}_{\vtheta}(\ra^{i}|s)}{\partial \theta^{i}_{j}}
        \right)^{2}
        \right]\nonumber\\
        &= \E_{\ra^{i}\sim\pi^{i}_{\vtheta}}
        \left[ 
        \left(b^{2} - 
        2b\ \hat{Q}(s, \va^{-i}, \ra^{i}) + \hat{Q}(s, \va^{-i}, \ra^{i})^{2})\right)
        \left(\frac{\partial\log\pi^{i}_{\vtheta}(\ra^{i}|s)}{\partial \theta^{i}_{j}}
        \right)^{2}
        \right]\nonumber\\
        &= b^{2} \ \E_{\ra^{i}\sim\pi^{i}_{\vtheta}}
        \left[ 
         \left(\frac{\partial\log\pi^{i}_{\vtheta}(\ra^{i}|s)}{\partial \theta^{i}_{j}}\right)^{2}     
        \right] - 2b \ \E_{\ra^{i}\sim\pi^{i}_{\vtheta}}\left[ 
         \hat{Q}(s, \va^{-i}, \ra^{i})
         \left(\frac{\partial\log\pi^{i}_{\vtheta}(\ra^{i}|s)}{\partial \theta^{i}_{j}}\right)^{2}     
        \right] \nonumber\\
        & \ \ \ \ \ \ \ \ \ \ \ \ \ \ \ \ \ \ + \E_{\ra^{i}\sim\pi^{i}_{\vtheta}}\left[ 
         \hat{Q}(s, \va^{-i}, \ra^{i})^{2}
         \left(\frac{\partial\log\pi^{i}_{\vtheta}(\ra^{i}|s)}{\partial \theta^{i}_{j}}\right)^{2}     
        \right] \nonumber
    \end{align}
    which is a quadratic in $b$. The last term of the quadratic does not depend on $b$, and so it can be treated as a constant. Recalling that the variance of the whole gradient vector $\rvg^{i}(b)$ is the sum of variances of its components $\rg^{i}_{j}(b)$, we obtain it by summing over $j$
    
    \begin{align}
        \label{eq:quadratic-of-ob}
        &\mathbf{Var}_{\ra^{i}\sim\pi^{i}_{\vtheta}}
        \left[ 
        \left(\hat{Q}(s, \va^{-i}, \ra^{i}) - b\right)
        \nabla_{\theta^{i}}\log\pi^{i}_{\vtheta}(\ra^{i}|s)
        \right]\nonumber\\
        &= \sum_{j}\Bigg(b^{2} \ \E_{\ra^{i}\sim\pi^{i}_{\vtheta}}
        \left[ 
         \left(\frac{\partial\log\pi^{i}_{\vtheta}(\ra^{i}|s)}{\partial \theta^{i}_{j}}\right)^{2}     
        \right]\nonumber\\
        &\ \ \ \ \ \ \ \ \ \ \ \ \ \ \ \ \ \ \ \ \ \ \ \ \ \ \ \ \ \ - 2b \  \E_{\ra^{i}\sim\pi^{i}_{\vtheta}}\left[ 
         \hat{Q}(s, \va^{-i}, \ra^{i})\left(
         \frac{\partial\log\pi^{i}_{\vtheta}(\ra^{i}|s)}{\partial \theta^{i}_{j}}
         \right)^{2}     
        \right] + const \Bigg)\nonumber\\
        &= b^{2} \ \E_{\ra^{i}\sim\pi^{i}_{\vtheta}}\left[ 
         \left|\left| \nabla_{\theta^{i}}\log\pi^{i}_{\vtheta}(\ra^{i}|s)\right|\right|^{2}   
        \right]\nonumber\nonumber \\
        &\ \ \ \ \ \ \ \ \ \ \ - 2b \  \E_{\ra^{i}\sim\pi^{i}_{\vtheta}}\left[ 
         \hat{Q}(s, \va^{-i}, \ra^{i})\left|\left| \nabla_{\theta^{i}}\log\pi^{i}_{\vtheta}(\ra^{i}|s)\right|\right|^{2}
        \right] + const
    \end{align}
    As the leading coefficient is positive, the quadratic achieves the minimum at
    \begin{align}
        b^{\text{optimal}} = \frac{ \E_{\ra^{i}\sim\pi^{i}_{\vtheta}}\left[ 
        \hat{Q}(\rs, \va^{-i}, \ra^{i}) \left|\left| \nabla_{\theta^{i}}\log \pi^{i}_{\vtheta}(\ra^{i}|s) \right|\right|^{2} \right] }{  
        \E_{\ra^{i}\sim\pi^{i}_{\vtheta}}\left[ 
        \left|\left| 
        \nabla_{\theta^{i}}\log \pi^{i}_{\vtheta}(\ra^{i}|\rs) 
        \right|\right|^{2} \right]}\nonumber
    \end{align}
\end{proof}

\subsection{Remarks about the surrogate optimal baseline}
In the paper, we discussed the impracticality of the above baseline. To handle this, we noticed that the policy $\pi^{i}_{\vtheta}(\ra^{i}|s)$, at state $s$, is determined by the output layer, $\psi^{i}_{\vtheta}(s)$, of an actor neural network. With this representation, in order to handle the impracticality of the above optimal baseline, we considered a minimisation objective, the \textsl{surrogate local variance}, given by
\begin{smalleralign}
    \mathbf{Var}_{\ra^{i}\sim\pi^{i}_{\vtheta}}\left[ 
    \nabla_{\psi^{i}_{\vtheta}}\log\pi^{i}_{\vtheta}\left(\ra^{i}|\psi^{i}_{\vtheta}(s)\right)
    \left(\hat{Q}(s, \va^{-i}, \ra^{i}) - b(s, \va^{-i})\right)
    \right]\nonumber
\end{smalleralign}
As a corollarly to the proof, the surrogate version of the optimal baseline (which we refer to as OB) was proposed, and it is given by
\begin{smalleralign}
    &b^{*}(\rs, \rva^{-i}) 
    = \frac{ \E_{\ra^{i}\sim\pi^{i}_{\vtheta}}\left[ \hat{Q}(\rs, \rva^{-i}, \ra^{i})
    \left|\left|
    \nabla_{\psi^{i}_{\vtheta}(\rs)}\log\pi^{i}\left(\ra^{i}|\psi^{i}_{\vtheta}(\rs)\right)
    \right|\right|^{2} \right]  }{ \E_{\ra^{i}\sim\pi^{i}_{\vtheta}}\left[ \left|\left|
    \nabla_{\psi^{i}_{\vtheta}(\rs)}\log\pi^{i}\left(\ra|\psi^{i}_{\vtheta}(\rs)\right)
    \right|\right|^{2} \right]  }.\nonumber
\end{smalleralign}

\begin{remark}
The 
$x^{i}_{\psi^{i}_{\vtheta}}$ measure, for which {\small $b^{*}(s, \va^{-i}) 
= \E_{\ra^{i}\sim\pi^{i}_{\vtheta}}\left[
\hat{Q}(s, \va^{-i}, \ra^{i}) \right]$}, is generally given by
\begin{align}
    \label{eq:x-measure-def}
    x^{i}_{\psi^{i}_{\vtheta}}\left(\ra^{i}|\rs\right) = \frac{ \pi^{i}_{\vtheta}\left( \ra^{i}|\rs \right) \left|\left| \nabla_{\theta^{i}}\log\pi^{i}_{\vtheta}\left( \ra^{i}|\rs \right) \right|\right|^{2} }{
    \E_{\ra^{i}\sim\pi^{i}_{\vtheta}}\left[ 
    \left|\left| \nabla_{\theta^{i}}\log\pi^{i}_{\vtheta}\left( \ra^{i}|\rs \right) \right|\right|^{2} 
    \right]
    }
\end{align}
\end{remark}

Let us introduce the definition of the $\softmax$ function, which is the subject of the next definition. For a vector $\vz\in\mathbb{R}^{d}$, we have $\softmax(\vz) = \left( \frac{e^{z_{1}}}{\eta}, \dots, \frac{e^{z_{d}}}{\eta}\right)$, where $\eta = \sum_{j=1}^{d}e^{z_{j}}$. We write
$\softmax\left( \psi^{i}_{\vtheta}(s)\right)\left( a^{i} \right) = \frac{\exp\left(\psi^{i}_{\vtheta}(s)(a^{i})\right)  }{ \sum_{\tilde{a}^{i}}\exp\left(\psi^{i}_{\vtheta}(s)(\tilde{a}^{i})\right) }$.

\begin{remark}
\label{remark:x-measure-formula}
When the action space is discrete, and the actor's policy is $\pi^{i}_{\vtheta}\left( \ra^{i}| \rs\right) = \softmax\left( \psi^{i}_{\vtheta}(s) \right)\left(\ra^{i}\right)$, then the $x^{i}_{\psi^{i}_{\vtheta}}$ measure is given by
\begin{align}
    x^{i}_{\psi^{i}_{\vtheta}}\left( \ra^{i}|\rs \right)
    =
    \frac{\pi^{i}_{\vtheta}\left(\ra^{i}|\rs\right)
    \left(1 + \left|\left|\pi_{\vtheta}^{i}(\rs)\right|\right|^{2} - 2\pi_{\vtheta}^{i}\left(\ra^{i}|\rs\right)\right)}{ 1 - ||\pi_{\vtheta}^{i}(\rs)||^{2}}\nonumber
\end{align}
\end{remark}
\begin{proof}
    \label{proof:x-measure}
    As we do not vary states $\rs$ and parameters $\vtheta$ in this proof, let us drop them from the notation for $\pi^{i}_{\vtheta}$, and $\psi^{i}_{\vtheta}(\rs)$, hence writing $\pi^{i}\left(\ra^{i}\right) = \softmax\left( \psi^{i} \right)\left(\ra^{i}\right)$. Let us compute the partial derivatives:
    \begin{align}
        &\frac{ \partial\log \pi^{i}\left(a^{i}\right) }{ \partial \psi^{i}\left( \tilde{a}^{i} \right)}
        = 
        \frac{\partial \log \softmax\left( \psi^{i}\right)\left(a^{i}\right) }
        { \partial \psi^{i}\left( \tilde{a}^{i} \right) } 
        =
        \frac{\partial}{\partial \psi^{i}\left( \tilde{a}^{i} \right)}\left[ \log
        \frac{ \exp\left( \psi^{i}\left(a^{i}\right) \right) }
        {
        \sum_{\hat{a}^{i}}\exp\left( \psi^{i}\left(\hat{a}^{i}\right)\right)
        } \right]\nonumber\\
        &=
        \frac{\partial}{\partial \psi^{i}\left( \tilde{a}^{i} \right)}\left[ \psi^{i}\left(a^{i}\right) 
        -
        \log\sum_{\hat{a}^{i}}\exp\left( \psi^{i}\left(\hat{a}^{i}\right)\right) \right]\nonumber\\
        &= \bold{I}\left(a^{i} = \tilde{a}^{i}\right)
        -
        \frac{\exp\left( \psi^{i}\left(
        \tilde{a}^{i}\right)\right) }{\sum_{\hat{a}^{i}}\exp\left( \psi^{i}\left(\hat{a}^{i}\right)\right)} = 
        \bold{I}\left(a^{i} = \tilde{a}^{i}\right)
        - \pi^{i}\left(\tilde{a}^{i}\right)\nonumber
    \end{align}
    where $\mathbf{I}$ is the indicator function, taking value $1$ if the stetement input to it is true, and $0$ otherwise. Taking $\ve_{k}$ to be the standard normal vector with $1$ in $k^{\text{th}}$ entry, we have the gradient
    \begin{align}
        \label{eq:grad-of-log}
        \nabla_{\psi^{i}}\log\pi^{i}\left( a^{i} \right) = \ve_{a^{i}} - \pi^{i} 
    \end{align}
    which has the squared norm
    \begin{align}
        & \left|\left|\nabla_{\psi^{i}}\log\pi^{i}\left( a^{i} \right)\right|\right|^{2} = \left|\left|\ve_{a^{i}} - \pi^{i}\right|\right|^{2} = \left(1 - \pi^{i}\left(a^{i}\right) \right)^{2} + \sum_{\tilde{a}^{i}\neq a^{i}}\left(-\pi^{i}\left( \tilde{a}^{i} \right)\right)^{2} \nonumber\\
        &= 1 + \sum_{\tilde{a}^{i}}\left(-\pi^{i}\left( \tilde{a}^{i} \right)\right)^{2} - 2\pi^{i}\left( a^{i} \right) = 1 + \left|\left| \pi^{i} \right|\right|^{2} - 2\pi^{i}\left( a^{i} \right)\nonumber.
    \end{align}
    The expected value of this norm is 
    \begin{align}
        &\E_{\ra^{i}\sim\pi^{i}}\left[  
        1 + \left|\left| \pi^{i} \right|\right|^{2} - 2\pi^{i}\left( a^{i} \right)
        \right] = 
        1 + \left|\left| \pi^{i} \right|\right|^{2} - 
        \E_{\ra^{i}\sim\pi^{i}}\left[  
        2\pi^{i}\left( a^{i} \right)
        \right]\nonumber\\
        &= 1 + \left|\left| \pi^{i} \right|\right|^{2}
        - 2\sum_{\tilde{a}^{i}}\left( \pi^{i}\left(a^{i}\right) \right)^{2} = 1  - \left|\left| \pi^{i} \right|\right|^{2}\nonumber
    \end{align}
    which combined with Equation \ref{eq:x-measure-def} finishes the proof.
\end{proof}

\subsection{Proof of Theorem \ref{thm:excess}}
\begin{restatable}{theorem}{esvtheorem}
\label{thm:excess}
The excess surrogate local variance for baseline $b$ satisfies
\vspace{-5pt}
\begin{smalleralign}
    \Delta \mathbf{Var}(b) = \left(b - b^{*}(s, \va^{-i})\right)^{2}
    \E_{\ra^{i}\sim\pi^{i}_{\vtheta}}\left[ \left|\left|
    \nabla_{\psi^{i}_{\vtheta}}\log\pi^{i}\left(\ra^{i}\big|\psi^{i}_{\vtheta}(s)\right)
    \right|\right|^{2} 
    \right]\nonumber
    \nonumber
\end{smalleralign}
In particular, the excess variance of the vanilla MAPG and COMA estimators satisfy 
\vspace{-5pt}
\begin{smalleralign}
    &\Delta \mathbf{Var}_{\text{MAPG}} \ \leq \  D_{i}^{2}\left(\mathbf{Var}_{\ra^{i}\sim\pi^{i}_{\vtheta}}\big[ A^{i}_{\vtheta}(s, \va^{-i}, \ra^{i}) \big] + Q_{\vtheta}^{-i}(s, \va^{-i})^{2} \right) \ \leq \ D_{i}^{2}\ \left(\epsilon_{i}^{2} + \left[ \frac{\beta}{1-\gamma}\right]^{2} \right) \nonumber\\
    & \Delta \mathbf{Var}_{\text{COMA}}  \ \leq \ 
    D_{i}^{2}\ \mathbf{Var}_{\ra^{i}\sim\pi^{i}_{\theta}}\big[ A^{i}_{\vtheta}(s, \va^{-i}, \ra^{i}) \big]
    \leq \ \left(\epsilon_{i}D_{i} \right)^{2} \nonumber
\end{smalleralign}
where 
    {\small 
    $D_{i} = \sup_{a^{i}}\left|\left|\nabla_{\psi_{\vtheta}^{i}}\log\pi^{i}_{\vtheta}\left(a^{i}|\psi^{i}_{\vtheta}(s)\right)\right|\right|,
    \ \text{and} \
    \epsilon_{i} = \sup_{s, \va^{-i}, a^{i}}\left|A^{i}_{\vtheta}(s, \va^{-i}, a^{i})\right| 
    $}.
\end{restatable}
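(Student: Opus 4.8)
The plan is to first establish the exact excess-variance identity by reusing the quadratic-in-$b$ structure already worked out for the optimal baseline, and then to read off the MAPG and COMA bounds by noting that the excess is dominated by the full surrogate local variance evaluated at each estimator's own baseline.

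First I would establish the exact formula. The surrogate local variance is obtained from the computation in the proof of Theorem~\ref{theorem:ob-marl} (Equation~\ref{eq:quadratic-of-ob}) essentially verbatim, with the parameter gradient $\nabla_{\theta^{i}}\log\pi^{i}_{\vtheta}$ replaced by the pre-softmax gradient $\nabla_{\psi^{i}_{\vtheta}}\log\pi^{i}$. This yields a quadratic in $b$,
\[
f(b) = b^{2}\, c - 2b\, \E_{\ra^{i}\sim\pi^{i}_{\vtheta}}\big[\hat{Q}(s,\va^{-i},\ra^{i})\,\|\nabla_{\psi^{i}_{\vtheta}}\log\pi^{i}\|^{2}\big] + \mathrm{const},
\]
with positive leading coefficient $c = \E_{\ra^{i}\sim\pi^{i}_{\vtheta}}[\|\nabla_{\psi^{i}_{\vtheta}}\log\pi^{i}\|^{2}]$ and vertex located exactly at $b^{*}(s,\va^{-i})$. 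Completing the square gives $f(b) = c\,(b-b^{*})^{2} + f(b^{*})$, so that $\Delta\mathbf{Var}(b) = f(b) - f(b^{*}) = (b-b^{*})^{2}\,c$, which is the claimed identity.

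For the two bounds, the key observation is that $f(b^{*})$ is itself a variance and hence nonnegative, so $\Delta\mathbf{Var}(b) = f(b) - f(b^{*}) \le f(b)$; the excess is dominated by the full surrogate local variance at the estimator's baseline, which lets me avoid computing $b^{*}$ explicitly. Bounding the variance by the second moment and summing over the components of $\psi^{i}_{\vtheta}$,
\[
f(b) \le \E_{\ra^{i}\sim\pi^{i}_{\vtheta}}\big[\|\nabla_{\psi^{i}_{\vtheta}}\log\pi^{i}\|^{2}\,(\hat{Q}(s,\va^{-i},\ra^{i})-b)^{2}\big] \le D_{i}^{2}\, \E_{\ra^{i}\sim\pi^{i}_{\vtheta}}\big[(\hat{Q}(s,\va^{-i},\ra^{i})-b)^{2}\big].
\]
It then remains to instantiate $b$. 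COMA uses the counterfactual baseline $b = Q^{-i}_{\vtheta}(s,\va^{-i}) = \E_{\ra^{i}\sim\pi^{i}_{\vtheta}}[\hat{Q}(s,\va^{-i},\ra^{i})]$, so $\hat{Q}-b = A^{i}_{\vtheta}(s,\va^{-i},\ra^{i})$; since $\E_{\ra^{i}}[A^{i}_{\vtheta}]=0$ the second moment equals $\mathbf{Var}_{\ra^{i}}[A^{i}_{\vtheta}]$, giving $\Delta\mathbf{Var}_{\text{COMA}} \le D_{i}^{2}\,\mathbf{Var}_{\ra^{i}}[A^{i}_{\vtheta}] \le (\epsilon_{i}D_{i})^{2}$ via $|A^{i}_{\vtheta}|\le\epsilon_{i}$. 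Vanilla MAPG uses no baseline ($b=0$); writing $\hat{Q} = A^{i}_{\vtheta} + Q^{-i}_{\vtheta}$ and using $\E_{\ra^{i}}[A^{i}_{\vtheta}]=0$ to annihilate the cross term yields the orthogonal decomposition $\E_{\ra^{i}}[\hat{Q}^{2}] = \mathbf{Var}_{\ra^{i}}[A^{i}_{\vtheta}] + Q^{-i}_{\vtheta}(s,\va^{-i})^{2}$, which gives the first MAPG inequality; the second then follows from $\mathbf{Var}_{\ra^{i}}[A^{i}_{\vtheta}]\le\epsilon_{i}^{2}$ and $|Q^{-i}_{\vtheta}|\le\beta/(1-\gamma)$, the latter inherited from the preliminary remark bounding $|Q_{\vtheta}|$ since $Q^{-i}_{\vtheta}$ is an average of $\hat{Q}$ values.

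The steps are largely mechanical, and the main thing to get right is the first one: faithfully transcribing the quadratic-in-$b$ computation into the surrogate ($\psi$-gradient) setting and confirming that its vertex coincides with the surrogate optimal baseline $b^{*}$. Everything downstream rests on the completion-of-the-square identity $f(b)-f(b^{*}) = c(b-b^{*})^{2}$ together with $f(b^{*})\ge 0$, so I would verify these two facts carefully before instantiating the estimators.
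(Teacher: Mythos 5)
Your proposal is correct, and its first half coincides with the paper's: both read off the quadratic $f(b)=c_{\vtheta}^{i}b^{2}-2b\,\E_{\ra^{i}\sim\pi^{i}_{\vtheta}}\bigl[\hat{Q}(s,\va^{-i},\ra^{i})\,\|\nabla_{\psi^{i}_{\vtheta}}\log\pi^{i}\|^{2}\bigr]+\mathrm{const}$ from the computation behind Equation~\ref{eq:quadratic-of-ob}, verify the vertex is $b^{*}$, and complete the square to get $\Delta\mathbf{Var}(b)=c_{\vtheta}^{i}(b-b^{*})^{2}$. Where you genuinely diverge is in the MAPG/COMA bounds. The paper's proof works with the exact expression $c_{\vtheta}^{i}(b-b^{*})^{2}$, rewrites $b^{*}$ as an expectation under the reweighted measure $x^{i}_{\psi^{i}_{\vtheta}}$, applies Jensen under that measure, and then changes measure back to $\pi^{i}_{\vtheta}$ via the density $\|\nabla_{\psi^{i}_{\vtheta}}\log\pi^{i}\|^{2}/c_{\vtheta}^{i}$; you bypass the $x$-measure entirely by dropping the nonnegative minimum, $\Delta\mathbf{Var}(b)=f(b)-f(b^{*})\le f(b)$, and then bounding the variance by the second moment, $f(b)\le\E_{\ra^{i}\sim\pi^{i}_{\vtheta}}\bigl[(\hat{Q}-b)^{2}\|\nabla_{\psi^{i}_{\vtheta}}\log\pi^{i}\|^{2}\bigr]\le D_{i}^{2}\,\E_{\ra^{i}\sim\pi^{i}_{\vtheta}}\bigl[(\hat{Q}-b)^{2}\bigr]$. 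The two routes meet at the same intermediate quantity (your second moment equals the paper's $c_{\vtheta}^{i}\E_{x}[\cdot^{2}]$ after the change of measure), and the total slack is identical --- your discarded $f(b^{*})$ together with the dropped squared-mean term equals the paper's Jensen gap $c_{\vtheta}^{i}\mathbf{Var}_{x}[\hat{Q}]$ --- so your instantiations ($b=0$ with the orthogonal split $\E[\hat{Q}^{2}]=\mathbf{Var}[A^{i}_{\vtheta}]+Q^{-i}_{\vtheta}(s,\va^{-i})^{2}$ for MAPG, and $b=Q^{-i}_{\vtheta}$ with the zero-mean advantage for COMA) reproduce exactly the constants in the theorem. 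What each approach buys: yours is shorter and more robust, needing only the vertex location and positivity of the minimal variance, with no closed form for $b^{*}$ or the $x$-measure; the paper's keeps the exact excess $c_{\vtheta}^{i}(b-b^{*})^{2}$ visible until the last step, which pinpoints where the looseness enters (the single Jensen step under $x^{i}_{\psi^{i}_{\vtheta}}$) and reuses machinery (the $x$-measure) that the paper needs anyway for implementing OB.
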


\begin{proof}
    \label{proof:comparison}
    The first part of the theorem (the formula for excess variance) follows from Equation \ref{eq:quadratic-of-ob}.
    For the rest of the statements, it suffices to show the first of each inequalities, as the later ones follow directly from the fact that $|Q_{\vtheta}(s, \va)| \leq \frac{\beta}{1-\gamma}$, $\mathbf{Var}_{\ra^{i}\sim\pi^{i}_{\vtheta}}\left[ A_{\vtheta}^{i}(s, \va^{-i}, \ra^{i}) \right] = \E_{\ra^{i}\sim\pi^{i}_{\vtheta}}\left[ A_{\vtheta}^{i}(s, \va^{-i}, \ra^{i})^{2} \right] $, and the definition of $\epsilon_{i}$.
    Let us first derive the bounds for $\Delta \mathbf{Var}_{\text{MAPG}}$. Let us, for short-hand, define
    \begin{align}
        &c_{\vtheta}^{i} := \E_{\ra^{i}\sim\pi^{i}_{\vtheta}}\left[ 
        \left|\left|
        \nabla_{\psi^{i}_{\vtheta}}\log \pi^{i}_{\vtheta}\left(\ra^{i}|\psi^{i}_{\vtheta}\right)
        \right|\right|^{2} \right]
        \nonumber
    \end{align}
    We have
    \begin{align}
        &\Delta\mathbf{Var}_{\text{MAPG}} = \Delta \mathbf{Var}(0) 
        = c_{\vtheta}^{i} \ b^{*}(s, \va^{-i})^{2} 
        = c_{\vtheta}^{i} \ \E_{\ra^{i}\sim x_{\psi^{i}_{\vtheta}}^{i}}
        \left[ \hat{Q}(s, \va^{-i}, \ra^{i}) \right]^{2} \nonumber\\
        &\leq \ c_{\vtheta}^{i} \ \E_{\ra^{i}\sim x_{\psi^{i}_{\vtheta}}^{i}}
        \left[ \hat{Q}(s, \va^{-i}, \ra^{i})^{2} \right]
        = c_{\vtheta}^{i}\E_{\ra^{i}\sim \pi_{\vtheta}^{i}}
        \left[
        \frac{\hat{Q}(s, \va^{-i}, \ra^{i})^{2}\left|\left|
        \nabla_{\psi^{i}_{\vtheta}}\log \pi^{i}_{\vtheta}\left(\ra^{i}|\psi^{i}_{\vtheta}\right)\right|\right|^{2} }{c_{\theta}^{i}} \right]\nonumber \\
        &= \ \E_{\ra^{i}\sim \pi_{\vtheta}^{i}}\left[
        \hat{Q}(s, \va^{-i}, \ra^{i})^{2}
        \left|\left|
        \nabla_{\psi^{i}_{\vtheta}}\log \pi^{i}_{\vtheta}\left(\ra^{i}|\psi^{i}_{\vtheta}(s)\right)\right|\right|^{2} 
        \right]\nonumber\\
        & \ \ \ \ \ \ \ \ \ \ \ \ \leq \E_{\ra^{i}\sim \pi_{\vtheta}^{i}}\left[
        \hat{Q}(s, \va^{-i}, \ra^{i})^{2} D_{i}^{2}\right] \nonumber\\ 
        &= \ D_{i}^{2}\left(
        \E_{\ra^{i}\sim \pi_{\vtheta}^{i}}\left[
        \hat{Q}(s, \va^{-i}, \ra^{i})^{2} \right] - \E_{\ra^{i}\sim \pi_{\vtheta}^{i}}\left[
        \hat{Q}(s, \va^{-i}, \ra^{i}) \right]^{2} + \E_{\ra^{i}\sim \pi_{\vtheta}^{i}}\left[
        \hat{Q}(s, \va^{-i}, \ra^{i}) \right]^{2} \right)\nonumber\\
        &= \ D_{i}^{2}\left( \mathbf{Var}_{\ra^{i}\sim\pi^{i}_{\vtheta}}
        \left[ \hat{Q}(s, \va^{-i}, \ra^{i}) \right] + \hat{Q}^{-i}(s, \va^{-i})^{2} \right)\nonumber\\
        & = D_{i}^{2}\left( \mathbf{Var}_{\ra^{i}\sim\pi^{i}_{\vtheta}}
        \left[ \hat{A}^{i}(s, \va^{-i}, \ra^{i}) \right] + \hat{Q}^{-i}(s, \va^{-i})^{2} 
        \right)\nonumber
    \end{align}
    which finishes the proof for MAPG. For COMA, we have
    \begin{align}
        &\Delta \mathbf{Var}_{\text{COMA}} \ = \ \Delta\mathbf{Var}\left( \hat{Q}^{-i}(s, \va^{-i})\right) =  c_{\vtheta}^{i}
        \left( b^{*}(s, \va^{-i}) - \hat{Q}^{-i}(s, \va^{-i})\right)^{2} \nonumber\\
        &= \ c_{\vtheta}^{i}\left( \E_{\ra^{i}\sim x_{\psi^{i}_{\vtheta}}^{i}}\left[ \hat{Q}(s, \va^{-i}, \ra^{i}) \right] - \hat{Q}^{-i}(s, \va^{-i}) \right)^{2} 
        \nonumber\\
        &= c_{\vtheta}^{i} \E_{\ra^{i}\sim x^{i}_{\psi^{i}_{\vtheta}} }
        \left[ \hat{Q}(s, \va^{-i}, \ra^{i})  - \hat{Q}^{-i}(s, \va^{-i}) \right]^{2} 
        \nonumber\\
        &= \ c_{\vtheta}^{i} \E_{\ra^{i}\sim x^{i}_{{\psi}^{i}_{\vtheta}} }\left[ \hat{A}^{i}(s, \va^{-i}, \ra^{i})  \right]^{2}
        \nonumber\\
        &\leq \ c_{\vtheta}^{i} \E_{\ra^{i}\sim x^{i}_{ {\psi}^{i}_{\vtheta} } }\left[ \hat{A}^{i}(s, \va^{-i}, \ra^{i})^{2}  \right] 
        \nonumber\\
        &= \ c_{\vtheta}^{i}\E_{\ra^{i}\sim \pi_{\vtheta}^{i}}\left[ \hat{A}^{i}(s, \va^{-i}, \ra^{i})^{2}\frac{\left|\left|\nabla_{\psi^{i}_{\vtheta}}\log \pi^{i}_{\vtheta}\left(\ra^{i}|\psi^{i}_{\vtheta}(s)\right)\right|\right|^{2} }{c_{\vtheta}^{i}}  \right] 
        \nonumber\\
        &= \ \E_{\ra^{i}\sim \pi_{\vtheta}^{i}}\left[ \hat{A}^{i}(s, \va^{-i}, \ra^{i})^{2}\left|\left|\nabla_{\psi^{i}_{\vtheta}}\log \pi^{i}_{\vtheta}\left(\ra^{i}|\psi^{i}_{\vtheta}(s)\right)\right|\right|^{2}  
        \right]
        \nonumber\\
        &\leq \ 
        \E_{\ra^{i}\sim \pi_{\vtheta}^{i}}\left[ D_{i}^{2}\hat{A}^{i}(s, \va^{-i}, \ra^{i})^{2} 
        \right] \ \leq \ \left(\epsilon_{i}D_{i}\right)^{2}
        \nonumber
    \end{align}
    which finishes the proof.
\end{proof}

\newpage

\section{Pytorch Implementations of the Optimal Baseline}
\label{appendix:implementation}

First, we import necessary packages, which are \textbf{PyTorch} \cite{paszke2019pytorch}, and its \textbf{nn.functional} sub-package. These are standard Deep Learning packages used in RL \cite{SpinningUp2018}.
\label{code:imports}
\lstinputlisting[language=Python]{code/imports.py}
\vspace{-5pt}
We then implement a simple method that normalises a row vector, so that its (non-negative) entries sum up to $1$, making the vector a probability distribution.
\label{code:normalise}
\lstinputlisting[language=Python]{code/normalize.py}
\vspace{-5pt}
The \textbf{discrete} OB is an exact dot product between the measure $x^{i}_{\psi^{i}_{\vtheta}}$, and available values of $\hat{Q}$.
\label{code:discrete-ob}
\lstinputlisting[language=Python]{code/discrete_ob.py}
\vspace{-5pt}
In the \textbf{continuos} case, the measure $x^{i}_{\psi^{i}_{\vtheta}}$ and $Q$-values can only be sampled at finitely many points.
\label{code:continuous-ob}
\lstinputlisting[language=Python]{code/continuous_ob.py}
\vspace{-5pt}
We can incorporate it into our MAPG algorith by simply replacing the values of advantage with the values of X, in the buffer. Below, we present a discrete example
\label{code:apply-discrete-ob}
\lstinputlisting[language=Python]{code/apply_discrete_ob.py}
\vspace{-5pt}
and a continous one
\label{code:apply-continuous-ob}
\lstinputlisting[language=Python]{code/apply_continuous_ob.py}

\section{Computations for the Numerical Toy Example}

Here we prove that the quantities in table are filled properly.
\begin{table}[h!]
    \begin{adjustbox}{width=\columnwidth, center}
    \begin{threeparttable}
        \vspace{-10pt}
             \begin{tabular}{c | c c c c c c| c c } \toprule
             \\[-1em]
             $\ra^{i}$ 
             & $\psi^{i}_{\vtheta}(\ra^{i})$ 
             & $\pi_{\vtheta}^{i}(\ra^{i})$ 
             & $x_{\psi^{i}_{\vtheta}}^{i}(\ra^{i})$ 
             & $\hat{Q}( \va^{-i}, \ra^{i})$ 
             & $\hat{A}^{i}(\va^{-i}, \ra^{i})$
             & $\hat{X}^{i}(\va^{-i}, \ra^{i})$
             &  Method & Variance \\ [0.6ex]
             \midrule
             1 & $\log 8$ & $0.8$ & 0.14 & $2$ & $-9.7$ & $-41.71$ & MAPG & \textbf{1321}\\ 
             2 & $0$ & $0.1$ &  0.43 & $1$ &$-10.7$ & $-42.71$ & COMA & \textbf{1015}\\
             3 & $0$ & $0.1$ &  0.43 & $100$ & $88.3$ & $56.29$ & OB & \textbf{\color{red}{673}}\\
             \bottomrule
             \end{tabular}
    \end{threeparttable}
    \end{adjustbox}
    \vspace{-15pt}
\end{table}

\begin{proof}
    \label{proof:toy-example}
    In this proof, for convenience, the multiplication and exponentiation of vectors is element-wise.
    Firstly, we trivially obtain the column $\pi^{i}_{\vtheta}(\ra^{i})$, by taking $\softmax$ over $\psi^{i}_{\vtheta}(\ra^{i})$. This allows us to compute the counterfactual baseline of COMA, which is
    \begin{smalleralign}
        \label{eq:compute-coma}
        &\hat{Q}^{-i}(\va^{-i})
        = \E_{\ra^{i}\sim\pi^{i}_{\vtheta}}\left[ \hat{Q}\left(
        \va^{-i}, \ra^{i} \right) \right]
        = \sum_{a^{i}=1}^{3}\pi^{i}_{\vtheta}(a^{i})\hat{Q}\left(\va^{-i}, a^{i}\right)\nonumber\\
        &= 0.8\times 2 + 0.1 \times 1 + 0.1 \times 100 = 1.6 + 0.1 + 10 = 11.7\nonumber
    \end{smalleralign}
    By subtracting this value from the column $\hat{Q}(\va^{-i}, \ra^{i})$, we obtain the column $\hat{A}^{i}(\va^{-i}, \ra^{i})$.
    
    Let us now compute the column of $x_{\psi^{i}_{\vtheta}}^{i}$. For this, we use Remark \ref{remark:x-measure-formula}. We have
    \begin{align}
        &\left|\left| \pi^{i}_{\vtheta} \right|\right|^{2} 
        = 0.8^{2} + 0.1^{2} + 0.1^{2} = 0.66\nonumber
    \end{align}
    and $1+ \left|\left| \pi^{i}_{\vtheta} \right|\right|^{2} - 2\pi^{i}_{\vtheta}(\ra^{i}) = 1.66 - 2\pi^{i}_{\vtheta}(\ra^{i})$, which is $0.06$ for $\ra^{i}=1$, and $1.46$ when $\ra^{i}=2, 3$.
    For $\ra^{i}=1$, we have that
    \begin{align}
        \pi^{i}_{\vtheta}(\ra^{i})\left( 1+ \left|\left| \pi^{i}_{\vtheta} \right|\right|^{2} - 2\pi^{i}_{\vtheta}(\ra^{i})\right) = 0.8\times0.06 = 0.048\nonumber
    \end{align}
    and for $\ra^{i}=2, 3$, we have
    \begin{align}
        \pi^{i}_{\vtheta}(\ra^{i})\left( 1+ \left|\left| \pi^{i}_{\vtheta} \right|\right|^{2} - 2\pi^{i}_{\vtheta}(\ra^{i})\right) = 0.1\times1.46 = 0.146\nonumber
    \end{align}
    We obtain the column $x^{i}_{\vtheta}(\ra^{i})$ by normalising the vector $(0.048, 0.146, 0.146)$.
    Now, we can compute OB, which is the dot product of the columns $x^{i}_{\psi^{i}_{\vtheta}}(\ra^{i})$ and $\hat{Q}(\va^{-i}, \ra^{i})$
    \begin{align}
        b^{*}(\va^{-i}) = 0.14\times2 + 0.43\times1 + 0.43\times100 = 0.28 + 0.43 + 43 = 43.71\nonumber
    \end{align}
    We obtain the column $\hat{X}^{i}(\va^{-i}, \ra^{i})$ after subtracting $b^{*}(\va^{-i})$ from the column $\hat{Q}(\va^{-i}, \ra^{i})$.
    
    Now, we can compute and compare the variances of vanilla MAPG, COMA, and OB. 
    The surrogate local variance of an MAPG estimator $\rvg^{i}(b)$ is
    \begin{smalleralign}
        &\mathbf{Var}_{\ra^{i}\sim\pi^{i}_{\vtheta}}\left[
        \rvg^{i}(b)
        \right]
        =
        \mathbf{Var}_{\ra^{i}\sim\pi^{i}_{\vtheta}}\left[
        \left( \hat{Q}^{i}\left(\va^{-i}, \ra^{i}\right)
        - b(\va^{-i})\right)
        \nabla_{\psi^{i}_{\vtheta}}\log\pi^{i}_{\theta}(a^{i})
        \right]\nonumber\\
        &= \text{sum}\left(
        \E_{\ra^{i}\sim\pi^{i}_{\vtheta}}\left[
        \left(
        \left[\hat{Q}^{i}\left(\va^{-i}, \ra^{i}\right)
        - b(\va^{-i})\right]
        \nabla_{\psi^{i}_{\vtheta}}\log\pi^{i}_{\theta}(a^{i})
        \right)^{2}
        \right]
        - 
        \E_{\ra^{i}\sim\pi^{i}_{\vtheta}}\left[
        \left( \hat{Q}^{i}\left(\va^{-i}, \ra^{i}\right)
        - b(\va^{-i})\right)
        \nabla_{\psi^{i}_{\vtheta}}\log\pi^{i}_{\theta}(a^{i})
        \right]^{2} \right)\nonumber\\
        &= \text{sum}\left(
        \E_{\ra^{i}\sim\pi^{i}_{\vtheta}}\left[
        \left(
        \left[\hat{Q}^{i}\left(\va^{-i}, \ra^{i}\right)
        - b(\va^{-i})\right]
        \nabla_{\psi^{i}_{\vtheta}}\log\pi^{i}_{\theta}(a^{i})
        \right)^{2}
        \right]\right)
        - 
        \text{sum}\left(
        \E_{\ra^{i}\sim\pi^{i}_{\vtheta}}\left[
        \hat{Q}^{i}\left(\va^{-i}, \ra^{i}\right)
        \nabla_{\psi^{i}_{\vtheta}}\log\pi^{i}_{\theta}(a^{i})
        \right]^{2} \right)\nonumber
    \end{smalleralign}
    where sum is taken element-wise. The last equality follows be linearity of element-wise summing, and the fact that $b$ is a baseline.
    We compute the variance of vanilla MAPG ($\rvg^{i}_{\text{MAPG}}$), COMA ($\rvg^{i}_{\text{COMA}}$), and OB ($\rvg^{i}_{X}$).
    Let us derive the first moment, which is the same for all methods
    \begin{align}
        &\E_{\ra^{i}\sim\pi^{i}_{\vtheta}}\left[ \hat{Q}\left(\va^{-i}, \ra^{i}\right)\nabla_{\psi^{i}_{\vtheta}}\log \pi^{i}_{\vtheta}(\ra^{i}) \right] = \sum_{a^{i}=1}^{3}\pi^{i}_{\vtheta}(a^{i})\hat{Q}\left(\va^{-i}, a^{i}\right)\nabla_{\psi^{i}_{\vtheta}}\log\pi^{i}_{\theta}(a^{i})\nonumber\\
        &\text{recalling Equation \ref{eq:grad-of-log}}\nonumber\\
        &= \ \sum_{a^{i}=1}^{3}\pi^{i}_{\vtheta}(a^{i})\hat{Q}\left(\va^{-i}, a^{i}\right)\left( \ve_{a^{i}} - \pi^{i}_{\vtheta} \right)\nonumber \\
        & =
        \ 0.8\times 2 \times \begin{bmatrix}
        0.2\\
        -0.1\\
        -0.1
        \end{bmatrix} + 0.1\times 1 \times \begin{bmatrix}
        -0.8\\
        0.9\\
        -0.1
        \end{bmatrix} +0.1\times 100 \times \begin{bmatrix}
        -0.8\\
        -0.1\\
        0.9
        \end{bmatrix}\nonumber\\
        &= \begin{bmatrix}
        0.32\\
        -0.16\\
        -0.16\\
        \end{bmatrix} +
        \begin{bmatrix}
        -0.08\\
        0.09\\
        -0.01\\
        \end{bmatrix} +
        \begin{bmatrix}
        -8\\
        -1\\
        9\\
        \end{bmatrix} =
        \begin{bmatrix}
        -7.76\\
        -1.07\\
        8.83\\
        \end{bmatrix}\nonumber
    \end{align}
    Now, let's compute the second moment for each of the methods, starting from vanilla MAPG
    \begin{align}
        &\E_{\ra^{i}\sim\pi^{i}_{\vtheta}}\left[ \hat{Q}\left(\va^{-i}, \ra^{i}\right)^{2}\left(\nabla_{\psi_{\vtheta}^{i}}\log \pi^{i}_{\vtheta}(\ra^{i})\right)^{2} \right] \nonumber\\
        &= \sum_{a^{i}=1}^{3}\pi^{i}_{\vtheta}(a^{i})\hat{Q}\left(\va^{-i}, a^{i}\right)^{2}
        \left(\nabla_{\psi^{i}_{\vtheta}}\log\pi^{i}_{\vtheta}(a^{i})\right)^{2}\nonumber\\
        &= \ \sum_{a^{i}=1}^{3}\pi^{i}_{\vtheta}(a^{i})\hat{Q}\left(\va^{-i}, a^{i}\right)^{2}\left(\ve_{a^{i}} - \pi^{i}_{\vtheta} \right)^{2}
        \nonumber\\
        &= \ 0.8\times 2^{2} \times \begin{bmatrix}
        0.2\\
        -0.1\\
        -0.1
        \end{bmatrix}^{2} + 0.1\times 1^{2} \times \begin{bmatrix}
        -0.8\\
        0.9\\
        -0.1
        \end{bmatrix}^{2} +0.1\times 100^{2} \times \begin{bmatrix}
        -0.8\\
        -0.1\\
        0.9
        \end{bmatrix}^{2}\nonumber\\
        &= 0.8\times 4 \times \begin{bmatrix}
        0.04\\
        0.01\\
        0.01
        \end{bmatrix} + 0.1 \times \begin{bmatrix}
        0.64\\
        0.81\\
        0.01
        \end{bmatrix} +0.1\times 10000 \times \begin{bmatrix}
        0.64\\
        0.01\\
        0.81
        \end{bmatrix}\nonumber\\
        &= \begin{bmatrix}
        0.128\\
        0.032\\
        0.032
        \end{bmatrix} +
        \begin{bmatrix}
        0.064\\
        0.081\\
        0.001
        \end{bmatrix} +
        \begin{bmatrix}
        640\\
        10\\
        810
        \end{bmatrix} \ = \ \begin{bmatrix}
        640.192\\
        10.113\\
        810.033
        \end{bmatrix}
        \nonumber
    \end{align}
    We have
    \begin{align}
        &\mathbf{Var}_{\ra^{i}\sim\pi^{i}_{\vtheta}}\left[ \rvg^{i}_{\text{MAPG}} \right] \nonumber\\
        &=
        \E_{\ra^{i}\sim\pi^{i}_{\vtheta}}\left[ \hat{Q}\left(\va^{-i}, \ra^{i}\right)^{2}\left(\nabla_{\psi_{\vtheta}^{i}}\log \pi^{i}_{\vtheta}(a^{i})\right)^{2} 
        \right] \nonumber\\
        &\ \ \ \ \ \ \ \ \ \ \ \ \ \ \ \ \ \ \ \ \ \ \ \ \ - \E_{\ra^{i}\sim\pi^{i}_{\vtheta}}\left[ \hat{Q}\left(\va^{-i}, \ra^{i}\right)\nabla_{\psi_{\vtheta}^{i}}\log \pi^{i}_{\vtheta}(\ra^{i}) \right]^{2}
        \nonumber\\
        &= \begin{bmatrix}
        640.192\\
        10.113\\
        810.033
        \end{bmatrix} - \begin{bmatrix}
        -7.76\\
        -1.07\\
        8.83\\
        \end{bmatrix}^{2}
        \ = \ \begin{bmatrix}
        640.192\\
        10.113\\
        810.033
        \end{bmatrix} - 
        \begin{bmatrix}
        60.2176\\
        1.1449\\
        77.9689
        \end{bmatrix}\ = \
        \begin{bmatrix}
        579.9744\\
        8.968\\
        732.064
        \end{bmatrix}
        \nonumber
    \end{align}
    So the variance of vanilla MAPG in this case is 
    \begin{align}
        \mathbf{Var}_{\ra^{i}\sim\pi^{i}_{\vtheta}}\left[ \rvg^{i}_{\text{MAPG}} \right] = 1321.007
        \nonumber
    \end{align}
    
    Let's now deal with COMA
    \begin{align}
        &\E_{\ra^{i}\sim\pi^{i}_{\vtheta}}\left[ \hat{A}^{i}
        \left(\va^{-i}, \ra^{i}\right)^{2}
        \left(\nabla_{\psi_{\vtheta}^{i}}\log \pi^{i}_{\vtheta}(\ra^{i})\right)^{2} 
        \right] \nonumber\\
        &= \sum_{a^{i}=1}^{3}\pi^{i}_{\vtheta}(a^{i})\hat{A}^{i}\left(\va^{-i}, a^{i}\right)^{2}
        \left(\nabla_{\psi^{i}_{\vtheta}}\log\pi^{i}_{\vtheta}(a^{i})\right)^{2}\nonumber\\
        &= \sum_{a^{i}=1}^{3}\pi^{i}_{\vtheta}(a^{i})
        \hat{A}^{i}\big(\va^{-i}, a^{i}\big)^{2}\left( \ve_{a^{i}} - \pi^{i}_{\vtheta} \right)^{2}\nonumber\\
        &= \ 0.8\times (-9.7)^{2} \times \begin{bmatrix}
        0.2\\
        -0.1\\
        -0.1
        \end{bmatrix}^{2} + 0.1\times (-10.7)^{2} \times \begin{bmatrix}
        -0.8\\
        0.9\\
        -0.1
        \end{bmatrix}^{2} +0.1\times 88.3^{2} \times \begin{bmatrix}
        -0.8\\
        -0.1\\
        0.9
        \end{bmatrix}^{2}\nonumber\\
        &= 0.8\times 94.09 \times \begin{bmatrix}
        0.04\\
        0.01\\
        0.01
        \end{bmatrix} + 0.1 \times 114.49 \times \begin{bmatrix}
        0.64\\
        0.81\\
        0.01
        \end{bmatrix} +0.1\times 7796.89 \times \begin{bmatrix}
        0.64\\
        0.01\\
        0.81
        \end{bmatrix}\nonumber\\
        &= \begin{bmatrix}
        3.011\\
        0.753\\
        0.753
        \end{bmatrix} +
        \begin{bmatrix}
        2.327\\
        9.274\\
        0.114
        \end{bmatrix} +
        \begin{bmatrix}
        499.001\\
        7.797\\
        631.548
        \end{bmatrix} \ = \ \begin{bmatrix}
        504.339\\
        17.824\\
        632.415
        \end{bmatrix}
        \nonumber
    \end{align}
    We have
    \begin{align}
        &\E_{\ra^{i}\sim\pi^{i}_{\vtheta}}\left[ \rvg^{i}_{\text{COMA}}\right] \nonumber\\
        =& \E_{\ra^{i}\sim\pi^{i}_{\vtheta}}
        \left[ \hat{A}^{i}\left(\va^{-i}, \ra^{i}\right)^{2}\left(\nabla_{\theta^{i}}\log \pi^{i}_{\vtheta}(\ra^{i})\right)^{2} 
        \right]  -
        \E_{\ra^{i}\sim\pi^{i}_{\vtheta}}\left[ \hat{A}^{i}\left(\va^{-i}, \ra^{i}\right)\nabla_{\theta^{i}}\log \pi^{i}_{\vtheta}(\ra^{i}) \right]^{2}\nonumber\\
        &= \begin{bmatrix}
        504.339\\
        17.824\\
        632.415
        \end{bmatrix} - \begin{bmatrix}
        -7.76\\
        -1.07\\
        8.83\\
        \end{bmatrix}^{2}
        \ = \ \begin{bmatrix}
        504.339\\
        17.824\\
        632.415
        \end{bmatrix} - 
        \begin{bmatrix}
        60.2176\\
        1.1449\\
        77.9689
        \end{bmatrix} \ = \
        \begin{bmatrix}
        444.1214\\
        16.6791\\
        554.4461
        \end{bmatrix}
        \nonumber
    \end{align}
    and we have
    \begin{align}
        \mathbf{Var}_{\ra^{i}\sim\pi^{i}_{\vtheta}}\left[ \rvg^{i}_{\text{COMA}} \right] = 1015.2466
        \nonumber
    \end{align}
    Lastly, we figure out OB
    \begin{align}
        &\E_{\ra^{i}\sim\pi^{i}_{\vtheta}}\left[ \hat{X}^{i}\left(\va^{-i}, \ra^{i}\right)^{2}
        \left(\nabla_{\psi^{i}_{\vtheta}}\log \pi^{i}_{\vtheta}(\ra^{i})\right)^{2} \right] \nonumber\\
        &= \sum_{a^{i}=1}^{3}\pi^{i}_{\vtheta}(a^{i})\hat{X}^{i}\left(\va^{-i}, a^{i}\right)^{2}
        \left(\nabla_{\psi^{i}_{\vtheta}}\log\pi^{i}_{\vtheta}(a^{i})\right)^{2}\nonumber\\
        &= \ \sum_{a^{i}=1}^{3}\pi^{i}_{\vtheta}(a^{i})\hat{X}^{i}\left(\va^{-i}, a^{i}\right)^{2}\left( \ve_{a^{i}} - \pi^{i}_{\vtheta} \right)^{2}
        \nonumber\\
        &= \ 0.8\times (-41.71)^{2} \times \begin{bmatrix}
        0.2\\
        -0.1\\
        -0.1
        \end{bmatrix}^{2} + 0.1\times (-42.71)^{2} \times \begin{bmatrix}
        -0.8\\
        0.9\\
        -0.1
        \end{bmatrix}^{2} +0.1\times 56.29^{2} \times \begin{bmatrix}
        -0.8\\
        -0.1\\
        0.9
        \end{bmatrix}^{2}\nonumber\\
        &= 0.8\times 1739.724 \times \begin{bmatrix}
        0.04\\
        0.01\\
        0.01
        \end{bmatrix} + 0.1 \times 1824.144 \times \begin{bmatrix}
        0.64\\
        0.81\\
        0.01
        \end{bmatrix} +0.1\times 3168.564 \times \begin{bmatrix}
        0.64\\
        0.01\\
        0.81
        \end{bmatrix}\nonumber\\
        &= \begin{bmatrix}
        55.6712\\
        13.92\\
        13.92
        \end{bmatrix} +
        \begin{bmatrix}
        116.7452\\
        147.756\\
        1.824
        \end{bmatrix} +
        \begin{bmatrix}
        202.788\\
        3.169\\
        256.654
        \end{bmatrix} \ = \ \begin{bmatrix}
        375.2044\\
        164.845\\
        272.398
        \end{bmatrix}
        \nonumber
    \end{align}
    We have
    \begin{align}
        &\E_{\ra^{i}\sim\pi^{i}_{\vtheta}}\left[ \rvg^{i}_{X} \right]\nonumber\\
        &= \E_{\ra^{i}\sim\pi^{i}_{\vtheta}}\left[ \hat{X}^{i}
        \left(\va^{-i},\ra^{i}\right)^{2}
        \left(\nabla_{\psi_{\vtheta}^{i}}\log \pi^{i}_{\vtheta}(\ra^{i})\right)^{2} \right] 
        -
        \E_{\ra^{i}\sim\pi^{i}_{\vtheta}}
        \left[ \hat{X}^{i}\left(\va^{-i}, \ra^{i}\right)\nabla_{\psi_{\vtheta}^{i}}\log \pi^{i}_{\theta}(\ra^{i}) \right]^{2}
        \nonumber\\
        &= \begin{bmatrix}
        375.2044\\
        164.845\\
        272.398
        \end{bmatrix} - \begin{bmatrix}
        -7.76\\
        -1.07\\
        8.83\\
        \end{bmatrix}^{2}
        \ = \ \begin{bmatrix}
         375.2044\\
        164.845\\
        272.398
        \end{bmatrix} - 
        \begin{bmatrix}
        60.2176\\
        1.1449\\
        77.9689
        \end{bmatrix}
        \ = \
        \begin{bmatrix}
        314.987\\
        163.7\\
        194.429
        \end{bmatrix}
        \nonumber
    \end{align}
    and we have
    \begin{align}
        \mathbf{Var}_{\ra^{i}\sim\pi^{i}_{\vtheta}}
        \left[ \rvg^{i}_{X} \right] = 673.116
        \nonumber
    \end{align}
 
\end{proof}

\section{Detailed Hyper-parameter Settings for Experiments}
\label{section:details-experiments}
In this section, we include the details of our experiments. Their implementations can be found in the following codebase:

\begin{tcolorbox}[colback=blue!15, colframe=blue!75]
    \begin{center}
    {\small \url{https://github.com/anynomous99/Settling-the-Variance-of-Multi-Agent-Policy-Gradients}.}
    \end{center}
\end{tcolorbox}

In COMA experiments, we use the official implementation in their codebase \cite{foerster2018counterfactual}. The only difference between COMA with and without OB is the baseline introduced, that is, the OB or the counterfactual baseline of COMA \cite{foerster2018counterfactual}.

Hyper-parameters used for COMA in the SMAC domain.
\begin{table}[h!]
    \begin{adjustbox}{width=0.7\columnwidth, center}
        \begin{tabular}{c | c c c} 
            \toprule
            Hyper-parameters & 3m & 8m & 2s3z\\
            \midrule
            actor lr & 5e-3 & 1e-2 & 1e-2\\ 
            critic lr & 5e-4 & 5e-4 & 5e-4\\
            gamma & 0.99 & 0.99 & 0.99\\
            epsilon start & 0.5 & 0.5 & 0.5\\
            epsilon finish & 0.01 & 0.01 & 0.01\\
            epsilon anneal time & 50000 & 50000 & 50000\\
            batch size & 8 & 8 & 8\\
            buffer size & 8 & 8 & 8\\
            target update interval & 200 & 200 & 200\\
            optimizer & RMSProp & RMSProp & RMSProp\\
            optim alpha & 0.99 & 0.99 & 0.99\\
            optim eps & 1e-5 & 1e-5 & 1e-5\\
            grad norm clip & 10 & 10 & 10\\
            actor network & rnn & rnn & rnn\\
            rnn hidden dim & 64 & 64 & 64\\
            critic hidden layer & 1 & 1 & 1\\
            critic hiddem dim & 128 & 128 & 128\\
            activation & ReLU & ReLU & ReLU\\
            eval episodes & 32 & 32 & 32\\
            \bottomrule
        \end{tabular}
    \end{adjustbox}
\end{table}
\clearpage
As for Multi-agent PPO, based on the official implementation \cite{mappo}, the original V-based critic is replaced by Q-based critic for OB calculation. Simultaneously, we have not used V-based tricks like the GAE estimator, when either using OB or state value as baselines, for fair comparisons.

We provide the pseudocode of our implementation of Multi-agent PPO with OB. We highlight the novel components of it (those unpresent, for example, in \cite{mappo}) in \textcolor{teal}{colour}.
\begin{algorithm}
  \caption{Multi-agent PPO with Q-critic and OB}
  \label{alg1}
  \begin{algorithmic}[1]
  \STATE Initialize $\theta$ and $\phi$, the parameters for actor $\pi$ \textcolor{teal}{and critic $Q$}
  \STATE $episode_{max} \gets step_{max}/batch\_size$
  \WHILE{$episode \leq episode_{max}$}
  \STATE Set data buffer $\bm{D} = \{ \}$
  \STATE Get initial states $s_{0}$ and observations $\bm{o_{0}}$
  \FOR{$t=0$ \TO $batch\_size$}
  \FORALL{agents $i$}
  \IF{discrete action space}
  \STATE $a_{t}^{i}, p_{\pi,t}^{i} \gets \pi(o_{t}^{i};\theta)$  // where $p_{\pi,t}^{i}$ is the probability distribution of available actions 
  \ELSIF{continuous action space}
  \STATE $a_{t}^{i}, p_{a, t}^{i} \gets \pi(o_{t}^{i};\theta)$  // where $p_{a, t}^{i}$ is the probability density of action $a_{t}^{i}$
  \ENDIF
  \STATE \textcolor{teal}{$q_{t}^{i} \gets Q(s_{t}, i, a_{t}^{i};\phi)$}
  \ENDFOR
  \STATE $s_{t+1}, o_{t+1}^{n}, r_{t} \gets$ execute $\{a_{t}^{1}...a_{t}^{n}\}$
  \IF{discrete action space}
  \STATE \textcolor{teal}{Append $[s_{t}, \bm{o_{t}}, \bm{a_{t}}, r_{t}, s_{t+1}, \bm{o_{t+1}}, \bm{q_{t}}, \bm{p_{\pi,t}}]$ to $\bm{D}$}
  \ELSIF{continuous action space}
  \STATE \textcolor{teal}{Append $[s_{t}, \bm{o_{t}}, \bm{a_{t}}, r_{t}, s_{t+1}, \bm{o_{t+1}}, \bm{q_{t}}, \bm{p_{a,t}}]$ to $\bm{D}$}
  \ENDIF
  \ENDFOR
  
  // from now all agents are processed in parallel in $D$
  \IF{discrete action space}
  \STATE \textcolor{teal}{$\bm{ob} \gets \text{optimal\_baseline}(\bm{q}, \bm{p_{\pi}})$} // use data from $D$
  \ELSIF{continuous action space}
  \STATE \textcolor{teal}{Resample $\bm{a_{t, 1...m}}, \bm{q_{t, 1...m}} \sim \bm{\mu_{t}}, \bm{\sigma_{t}}$ for each $s_{t}, \bm{o_{t}}$}
  \STATE \textcolor{teal}{$\bm{ob} \gets \text{optimal\_baseline}(\bm{a}, \bm{q}, \bm{\mu}, \bm{\sigma})$} // use resampled data
  \ENDIF
  \STATE \textcolor{teal}{$\bm{X} \gets \bm{q} - \bm{ob}$}
  \STATE \textcolor{teal}{$\text{Loss}(\theta) \gets - \text{mean}(\bm{X} \cdot \log \bm{p_{a}})$}
  \STATE Update $\theta$ with Adam/RMSProp to minimise $\text{Loss}(\theta)$
  \ENDWHILE
  \end{algorithmic}
  The critic parameter $\phi$ is trained with TD-learning \cite{sutton2018reinforcement}.
\end{algorithm}
\clearpage

Hyper-parameters used for Multi-agent PPO in the SMAC domain.
\begin{table}[h!]
    \begin{adjustbox}{width=0.6\columnwidth, center}
        \begin{tabular}{c | c } 
            \toprule
            Hyper-parameters & 3s vs 5z / 5m vs 6m / 6h vs 8z / 27m vs 30m\\
            \midrule
            actor lr & 1e-3\\ 
            critic lr & 5e-4\\
            gamma & 0.99\\
            batch size & 3200\\
            num mini batch & 1\\
            ppo epoch & 10\\
            ppo clip param & 0.2\\
            entropy coef & 0.01\\
            optimizer & Adam\\
            opti eps & 1e-5\\
            max grad norm & 10\\
            actor network & mlp\\
            hidden layper & 1\\
            hidden layer dim & 64\\
            activation & ReLU\\
            gain & 0.01\\
            eval episodes & 32\\
            use huber loss & True\\
            rollout threads & 32\\
            episode length & 100\\
            \bottomrule
        \end{tabular}
    \end{adjustbox}
\end{table}

Hyper-parameters used for Multi-agent PPO in the Multi-Agent MuJoCo domain.
\begin{table}[h!]
    \begin{adjustbox}{width=0.8\columnwidth, center}
        \begin{tabular}{c | c c c c} 
            \toprule
            Hyper-parameters & Hopper(3x1) & Swimmer(2x1) & HalfCheetah(6x1) & Walker(2x3)\\
            \midrule
            actor lr & 5e-6 & 5e-5 & 5e-6 & 1e-5\\ 
            critic lr & 5e-3 & 5e-3 & 5e-3 & 5e-3\\
            lr decay & 1 & 1 & 0.99 & 1\\
            episode limit & 1000 & 1000 & 1000 & 1000\\
            std x coef & 1 & 10 & 5 & 5\\
            std y coef & 0.5 & 0.45 & 0.5 & 0.5\\
            ob n actions & 1000 & 1000 & 1000 & 1000\\
            gamma & 0.99 & 0.99 & 0.99 & 0.99\\
            batch size & 4000 & 4000 & 4000 & 4000\\
            num mini batch & 40 & 40 & 40 & 40\\
            ppo epoch & 5 & 5 & 5 & 5\\
            ppo clip param & 0.2 & 0.2 & 0.2 & 0.2\\
            entropy coef & 0.001 & 0.001 & 0.001 & 0.001\\
            optimizer & RMSProp & RMSProp & RMSProp & RMSProp\\
            momentum & 0.9 & 0.9 & 0.9 & 0.9\\
            opti eps & 1e-5 & 1e-5 & 1e-5 & 1e-5\\
            max grad norm & 0.5 & 0.5 & 0.5 & 0.5\\
            actor network & mlp & mlp & mlp & mlp\\
            hidden layper & 2 & 2 & 2 & 2\\
            hidden layer dim & 32 & 32 & 32 & 32\\
            activation & ReLU & ReLU & ReLU & ReLU\\
            gain & 0.01 & 0.01 & 0.01 & 0.01\\
            eval episodes & 10 & 10 & 10 & 10\\
            use huber loss & True & True & True & True\\
            rollout threads & 4 & 4 & 4 & 4\\
            episode length & 1000 & 1000 & 1000 & 1000\\
            \bottomrule
        \end{tabular}
    \end{adjustbox}
\end{table}

For QMIX and COMIX baseline algorithms, we use implementation from their official codebases and keep the performance consistent with the results reported in their original papers \cite{peng2020facmac, rashid2018qmix}. MADDPG is provided along with COMIX, which is derived from its official implementation as well \cite{maddpg}. 

\clearpage
Hyper-parameters used for QMIX baseline in the SMAC domain.
\begin{table}[h!]
    \begin{adjustbox}{center}
        \begin{tabular}{c | c } 
            \toprule
            Hyper-parameters & 3s vs 5z / 5m vs 6m / 6h vs 8z / 27m vs 30m\\
            \midrule
            critic lr & 5e-4\\
            gamma & 0.99\\
            epsilon start & 1\\
            epsilon finish & 0.05\\
            epsilon anneal time & 50000\\
            batch size & 32\\
            buffer size & 5000\\
            target update interval & 200\\
            double q & True\\
            optimizer & RMSProp\\
            optim alpha & 0.99\\
            optim eps & 1e-5\\
            grad norm clip & 10\\
            mixing embed dim & 32\\
            hypernet layers & 2\\
            hypernet embed & 64\\
            critic hidden layer & 1\\
            critic hiddem dim & 128\\
            activation & ReLU\\
            eval episodes & 32\\
            \bottomrule
        \end{tabular}
    \end{adjustbox}
\end{table}

Hyper-parameters used for COMIX baseline in the Multi-Agent MuJoCo domain.
\begin{table}[h!]
    \begin{adjustbox}{center}
        \begin{tabular}{c | c } 
            \toprule
            Hyper-parameters & Hopper(3x1) / Swimmer(2x1) / HalfCheetah(6x1) / Walker(2x3)\\
            \midrule
            critic lr & 0.001\\
            gamma & 0.99\\
            episode limit & 1000\\
            exploration mode & Gaussian\\
            start steps & 10000\\
            act noise & 0.1\\
            batch size & 100\\
            buffer size & 1e6\\
            soft target update & True\\
            target update tau & 0.001\\
            optimizer & Adam\\
            optim eps & 0.01\\
            grad norm clip & 0.5\\
            mixing embed dim & 64\\
            hypernet layers & 2\\
            hypernet embed & 64\\
            critic hidden layer & 2\\
            critic hiddem dim & [400, 300]\\
            activation & ReLU\\
            eval episodes & 10\\
            \bottomrule
        \end{tabular}
    \end{adjustbox}
\end{table}

\clearpage
Hyper-parameters used for MADDPG baseline in the Multi-Agent MuJoCo domain.
\begin{table}[h!]
    \begin{adjustbox}{center}
        \begin{tabular}{c | c } 
            \toprule
            Hyper-parameters & Hopper(3x1) / Swimmer(2x1) / HalfCheetah(6x1) / Walker(2x3)\\
            \midrule
            actor lr & 0.001\\
            critic lr & 0.001\\
            gamma & 0.99\\
            episode limit & 1000\\
            exploration mode & Gaussian\\
            start steps & 10000\\
            act noise & 0.1\\
            batch size & 100\\
            buffer size & 1e6\\
            soft target update & True\\
            target update tau & 0.001\\
            optimizer & Adam\\
            optim eps & 0.01\\
            grad norm clip & 0.5\\
            mixing embed dim & 64\\
            hypernet layers & 2\\
            hypernet embed & 64\\
            actor network & mlp\\
            hidden layer & 2\\
            hiddem dim & [400, 300]\\
            activation & ReLU\\
            eval episodes & 10\\
            \bottomrule
        \end{tabular}
    \end{adjustbox}
\end{table}

\clearpage


\clearpage
\bibliographystyle{plain}
\bibliography{sample.bib}

\end{document}